\newtheorem{lem}{Lemma}
\newtheorem{cor}{Corollary}
\newcommand{\G}{\mbox{$\mathcal G$}}
\newcommand{\D}{\mbox{$\mathcal D$}}
\newcommand{\C}{\mbox{$\cal C$}}
\newcommand{\Tr}{\mbox{Tr}}
\newcommand{\ob}{\mbox{$\overline{\omega}$}}
\newcommand{\om}{\mbox{$\omega$}}
\newcommand{\F}{\mathbb{F}}
\DeclareMathOperator{\wt}{wt}
\begin{document}

\title{Additive complementary dual codes over $\F_4$\thanks{This research is supported by National Natural Science Foundation of China (12071001).}}
\author{ Minjia Shi         \and
        Na Liu         \and
        Jon-Lark Kim       \and
        Patrick Sol\'e
}
\institute{M. Shi \at
              Key Laboratory of Intelligent Computing Signal Processing, Ministry of Education, School of Mathematical Sciences, Anhui University, Hefei, Anhui, 230601, China. \\
              \email{smjwcl.good@163.com}
           \and
           Na Liu  \at
              School of Mathematical Sciences, Anhui University, Hefei, Anhui, 230601, China.\\
              \email{naliu1177@163.com}
           \and
           Jon-Lark Kim \at
          Department of Mathematics, Sogang University, Seoul, South Korea\\
           \email{jlkim@sogang.ac.kr}
           \and
           P. Sol\'e \at
           Aix Marseille Univ, CNRS, Centrale Marseille, I2M, Marseille, France\\
           \email{sole@enst.fr}
}

\date{Received: date / Accepted: date}



\date{}

\maketitle

\begin{abstract}
A linear code is linear complementary dual (LCD) if it meets its dual trivially.
LCD codes have been a hot topic recently due to Boolean masking application in the security of embarked electronics (Carlet and Guilley, 2014).
Additive codes over $\F_4$ are $\F_4$-codes that are stable by codeword addition but not necessarily by scalar multiplication. An additive code over $\F_4$ is additive complementary dual (ACD) if it meets its dual trivially.
The aim of this research is to study such codes which meet their dual trivially.
All the techniques and problems used to study LCD codes  are potentially relevant to ACD codes.
Interesting constructions of ACD codes from binary codes are given with respect to the trace Hermitian and trace Euclidean inner product. The former product is relevant to
quantum codes.
\end{abstract}\\

\noindent
{\bf Keywords} Additive code, finite field, LCD code\\

\noindent
{\bf Mathematics Subject Classification:} Primary 94 B05, Secondary 16 L 30.

\section{Introduction}
To begin with, let us recall some basic definitions.
A {\em linear $[n,k]$ code} over a finite field $\mathbb F_q$ is a $k$-dimensional subspace of $\mathbb F_q^n$.
The set of vectors orthogonal to $\C$ under the usual inner product is the {\em dual} of $\C$, denoted by $\C^{\perp}$.
A linear code $\C$ is {\em self-orthogonal} if $\C \subset \C^{\perp}$, and {\em self-dual} if $\C = \C^{\perp}$. On the other hand, a linear
code $\C$ is an {\em LCD code} (linear complementary dual code) if $\C \cap \C^{\perp}= \{0\}$. In particular, $\C$ is a {\em binary LCD code} if $\C$ is a binary linear code satisfying $\C \cap \C^{\perp}= \{0\}$.

Massey~\cite{Massey} introduced the notion of LCD codes in order to provide an optimum linear coding
solution for the two-user binary adder channel.
In a later work~\cite{Massey92}, he also showed that there exist asymptotically good LCD codes. Furthermore, Sendrier showed that LCD codes meet
the asymptotic Gilbert-Varshamov bound in \cite{Sen}.

In 2014, Carlet and Guilley~\cite{CarGui} introduced several constructions of LCD codes and investigated an application of LCD codes against
Side-Channel Attacks (SCA) and Fault Injection Attacks (FIA).
Recall from~\cite{CarGui} that SCA consists in passively recording some leakage that is the source of information to retrieve the key,
and that FIA consist in actively perturbating the device so as to obtain exploitable differences at the output.

In the approach of \cite{CarGui}, the direct sum $\C \oplus \C^\bot=\F_q^n$ is essential, and the minimum distance of $\C$ (resp. $\C^\bot$) acts
as a performance criterion  for SCA (resp. FIA).  Since this model does not use the linearity of $\C$ but only its additivity, it makes sense to study additive complementary dual (ACD)
codes over finite fields or finite rings. Moreover, since linear codes are additive codes, LCD codes are ACD. Furthermore, Guilley~\cite{Gui} reported to
us that the application of ACD codes to security still makes sense. This motivates the current study.

In the same spirit, Shi et al.~\cite{ShiLiKimSol} studied ACD codes over a noncommutative non-unital ring $E$ with four elements recently. Nevertheless,
little is known about a general theory of ACD codes over $\mathbb F_4$.
All the techniques and problems used to study LCD codes \cite{D+} are potentially relevant to ACD codes.
Interesting constructions of ACD codes from binary codes are given with respect to two trace inner products: the trace Euclidean inner product, and the trace Hermition inner product, familiar since the studies of quantum codes
\cite{CRSS}. We have also constructed ACD $(6, 2^{5}, 4)$, $(35, 2^7, 26)$, and $(96, 2^7, 72)$ codes over $\F_4$ under the trace Euclidean inner product, all of which have more codewords than optimal LCD $[6,2,4]$, $[35,3,26]$, and $[96,3,72]$ codes over $\F_4$, respectively.

Our paper consists of five sections. Section 2 recalls basic definitions and notations from additive codes under the two inner products. Section 3
discusses ACD codes with respect to the trace Hermitian inner product. Section 4 discusses ACD codes with with respect to the trace Euclidean inner product. Section 5
concludes the article.










\section{Preliminaries}
Let $\F_2=\{0,1\}$ and $\F_4=\{0,1,\omega,\bar{\omega}\}$ denote the finite field of order 2 and 4 respectively, where $\bar{\omega}=\omega^2=\omega+1$, $\omega^{3}=1$.

An {\em additive code \C\ over $\F_4$ of length $n$}
is an additive subgroup of $\F_4^n$.
As \C\ is a free $\F_2$-module, it has
size $2^k$ for some $0\le k\le 2n$.  We call \C\ an $(n,2^k)$ code.
It has a basis, as a $\F_2$-module, consisting of $k$ basis vectors. Interest in additive codes over $\F_4$ has arisen because of their
correspondence to quantum codes as described in \cite{CRSS}.  There
is a natural inner product on the additive codes
arising from the trace map. The
{\em trace} map $\Tr:\F_4\rightarrow \F_2$ is given by
\[\Tr(x)=x+x^2.\]
In particular $\Tr(0)=\Tr(1)=0$ and $\Tr(\omega)=\Tr(\ob)=1$.
The {\em conjugate} of $x\in \F_4$, denoted $\overline{x}$, is the image
of $x$ under the Frobenius automorphism; in other words, $\overline{0}=0$,
$\overline{1}=1$, and $\overline{\ob}=\omega$.

\begin{definition}
A {\em generator matrix of an $(n,2^k)$ additive code $\C$ over $\F_4$} is a $k\times n$ matrix $G$ with entries in $\F_4$ such that $\C=\{{\bf u}G:\ {\bf u}\in \F_2^k\}$. Note that $G$ has 2-rank $k$.
\end{definition}

As usual, the {\em weight} $\wt({\bf c})$ of ${\bf c}\in\C$ is the number of
nonzero components of ${\bf c}$.  The minimum weight $d$ of \C\ is the smallest
weight of any nonzero codeword in \C.  If \C\ is an $(n,2^k)$ additive code
of minimum weight $d$, \C\ is called an $(n,2^k,d)$ code.

\begin{example}\label{ex:hexacode_quan}
 Let $\G_6$ be the $[6,3,4]$ {\em hexacode} whose generator matrix as a
linear $\F_4$-code is
\[{\setlength{\arraycolsep}{1.785pt}
\renewcommand{\arraystretch}{.7}
\left[\begin{array}{cccccc}
1&0&0&1&\omega&\omega\\
0&1&0&\omega&1&\omega\\
0&0&1&\omega&\omega&1
\end{array}\right].
}\]
This is also an additive code; thinking of $\G_6$ as an additive code,
it has generator matrix
\[{\setlength{\arraycolsep}{1.785pt}
\renewcommand{\arraystretch}{.3}
 \left[\begin{array}{cccccc}
1&0&0&1&\omega&\omega\\
0&1&0&\omega&1&\omega\\
0&0&1&\omega&\omega&1\\
\omega&0&0&\omega&\ob&\ob\\
0&\omega&0&\ob&\omega&\ob\\
0&0&\omega&\ob&\ob&\omega
\end{array}\right].
}\]
\end{example}



For $\textbf x=(x_1,x_2,\ldots,x_n)$ and $\textbf y=(y_1,y_2,\ldots,y_n)$ in $\F_4^n$, we define the Hermitian inner product,
the trace Hermitian inner product and the trace Euclidean inner product of $\textbf x$ and $\textbf y$ as follows:
$$
(i)~ \textbf x\cdot \textbf y=\sum_{i=1}^n x_i\bar{y_i},\
(ii)~ \textbf x\star \textbf y=\sum_{i=1}^n{\rm Tr}(x_i\bar{y_i}),\
(iii)~ \textbf x\diamond \textbf y=\sum_{i=1}^n{\rm Tr}(x_iy_i).$$

If $\C$ is an $(n,2^k)$ additive code, the dual of $\C$ with respect to the Hermitian inner product, the trace Hermitian inner product and the trace Euclidean inner product are defined as follows:
$$  \C^{\perp_{\rm H}}=\{u\in \F_4^n:u\cdot v=0 \ for \ all \ v\in \C\},$$

$$  \C^{\perp_{{\rm TrH}}}=\{u\in \F_4^n:u\star v=0 \ for \ all \ v\in \C\},$$

$$  \C^{\perp_{{\rm TrE}}}=\{u\in \F_4^n:u \diamond v=0 \ for \ all \ v\in \C\}.$$

Obviously, $\C^{\perp_{\rm H}}$, $\C^{\perp_{{\rm TrH}}}$ and $\C^{\perp_{{\rm TrE}}}$ are $(n, 2^{2n-k})$ additive codes. If, in addition, $\C$ is linear, we say $\C$ is {\em linear complementary dual with respect to the Hermitian inner product (or quaternary Hermitian LCD)} if $\C \cap \C^{\perp_{{\rm H}}}=\{{\bf 0}\}$. If $\C$ is additive, we say $\C$ is {\em additive complementary dual (ACD) with respect to the trace Hermitian inner product} if $\C \cap \C^{\perp_{{\rm TrH}}}=\{{\bf 0}\}$, and $\C$ is {\em additive complementary dual (ACD) with respect to the trace Euclidean inner product} if $\C \cap \C^{\perp_{{\rm TrE}}}=\{{\bf 0}\}$.

\section{ACD codes with respect to the trace Hermitian inner product $\star$}
We have a criteria for ACD codes in terms of a generator matrix as follows.

\begin{theorem}{\em(\cite[Proposition 1]{Car}, \cite[Proposition 1]{Massey92})} \label{LG}
Let $G$ be a generator matrix for an $[n,k]$ linear code $\C$ over a field. Then $\C$ is a Euclidean (resp. a Hermitian) LCD
code if and only if, the $k \times k$ matrix $GG^T$ (resp. $G\bar{G}^T$) is invertible.
\end{theorem}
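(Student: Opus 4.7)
The plan is to translate both conditions into statements about the kernel of a $k\times k$ matrix and check they coincide. First I would observe that, since $G$ has rank $k$, every codeword of $\C$ is uniquely of the form $\mathbf{u}G$ for some $\mathbf{u}\in\F_q^k$, and the map $\mathbf{u}\mapsto\mathbf{u}G$ is injective. This injectivity is what will later let me pass from a statement about the whole code to a statement about vectors $\mathbf{u}\in\F_q^k$.

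Next I would rewrite the membership condition for the dual. In the Euclidean case, $\mathbf{u}G\in\C^{\perp}$ is equivalent to $\mathbf{u}G$ being orthogonal to each of the $k$ rows of $G$, which compactly reads $\mathbf{u}GG^T=\mathbf{0}$. Therefore $\C\cap\C^{\perp}$ is in bijection with $\{\mathbf{u}\in\F_q^k:\mathbf{u}GG^T=\mathbf{0}\}$, i.e., with the left kernel of the $k\times k$ matrix $GG^T$. Using the injectivity from the first step, $\C\cap\C^{\perp}=\{\mathbf{0}\}$ is then equivalent to this left kernel being trivial, and for a square matrix over a field this is equivalent to invertibility of $GG^T$. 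This gives the Euclidean equivalence in both directions.

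For the Hermitian version I would repeat the same argument verbatim, replacing the usual inner product $\mathbf{x}\cdot\mathbf{y}=\sum x_iy_i$ by its conjugate-linear counterpart $\sum x_i\overline{y_i}$. The key algebraic fact I use is that the Frobenius conjugation $x\mapsto\bar{x}$ is an involutive field automorphism, so the same rank/kernel manipulations go through; the condition $\mathbf{u}G\cdot_{\mathrm{H}}(G_i)=0$ for all rows becomes $\mathbf{u}G\overline{G}^T=\mathbf{0}$, and trivial left kernel of $G\overline{G}^T$ is again equivalent to invertibility.

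I do not foresee a real obstacle; the only point that deserves care is explicitly invoking $\mathrm{rank}(G)=k$ when passing from $\mathbf{u}G=\mathbf{0}$ to $\mathbf{u}=\mathbf{0}$, and noting that since $GG^T$ (resp. $G\overline{G}^T$) is square, trivial left kernel, trivial right kernel, and invertibility are all equivalent over a field. The result is essentially a restatement of these standard linear-algebra facts together with the definition of the dual code.
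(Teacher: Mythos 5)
Your argument is correct and complete: identifying $\C\cap\C^{\perp}$ with the left kernel of $GG^T$ (resp.\ $G\bar{G}^T$) via the injective map $\mathbf{u}\mapsto\mathbf{u}G$, and using that a square matrix over a field is invertible precisely when its left kernel is trivial, is exactly the standard proof of this criterion. The paper itself gives no proof but cites Massey and Carlet et al., and your reasoning matches the argument found in those sources, including the correct handling of the conjugate-linear (Hermitian) case.
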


\begin{theorem}{\em(\cite{BooJit})} \label{thm-tH-ACD}
Let $\C$ be an additive $(n, 2^k)$ code over $\F_4$ with generator matrix $G$.
Then $\C$ is ACD with respect to the trace Hermitian inner product if and only if $G \star G=G \bar{G}^T + \bar{G}G^T$ is invertible.
\end{theorem}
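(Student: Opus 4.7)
The plan is to translate the ACD condition into a linear-algebra statement about $M:=G\bar G^T+\bar G G^T$ regarded as a matrix over $\F_2$. First I would express the trace Hermitian pairing of two codewords in matrix form. For $\mathbf u,\mathbf v\in\F_2^k$, a direct computation using the $\F_2$-linearity of $\Tr$ and the identity $\Tr(a)=a+\bar a$ for $a\in\F_4$ gives
\[
(\mathbf u G)\star(\mathbf v G)=\Tr\!\bigl(\mathbf u\,G\bar G^T\mathbf v^T\bigr)=\mathbf u\bigl(G\bar G^T+\bar G G^T\bigr)\mathbf v^T=\mathbf u M\mathbf v^T,
\]
where in the last step I use that $\mathbf u,\mathbf v$ are fixed by conjugation. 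A short side remark records that $\bar M=M$, so $M$ genuinely lies in $M_k(\F_2)$, which is essential for the ``invertible'' assertion to make sense over $\F_2$.

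Next I would characterize the intersection $\C\cap\C^{\perp_{\TrH}}$. A codeword $\mathbf u G$ lies in the intersection exactly when $(\mathbf u G)\star(\mathbf v G)=0$ for every $\mathbf v\in\F_2^k$, i.e.\ when $\mathbf u M\mathbf v^T=0$ for all $\mathbf v\in\F_2^k$, which is equivalent to $\mathbf u M=\mathbf 0$ in $\F_2^k$. Here I would invoke the fact from the preliminaries that $G$ has $2$-rank $k$, so $\mathbf u\mapsto\mathbf u G$ is injective on $\F_2^k$; therefore the intersection is trivial precisely when the left null space of $M$ over $\F_2$ is trivial.

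Finally, I would close the argument by noting that a square $k\times k$ matrix over a field has trivial left null space if and only if it is invertible, giving both directions of the ``iff.'' The main obstacle is conceptual rather than computational: one must be careful to keep the two rings ($\F_2$ for the coefficient vectors $\mathbf u,\mathbf v$ and the matrix $M$, versus $\F_4$ for the entries of $G$) straight, and in particular to justify that $M$ really does lie in $M_k(\F_2)$ so that invertibility is meant over $\F_2$. Once that bookkeeping is done the proof is essentially a one-line matrix identity followed by a standard rank argument, paralleling the proof of Theorem~\ref{LG} for ordinary LCD codes.
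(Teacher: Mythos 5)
Your argument is correct and complete: the identity $(\mathbf u G)\star(\mathbf v G)=\mathbf u\,(G\bar G^T+\bar G G^T)\,\mathbf v^T$ holds exactly as you compute it, the remark that $M=G\bar G^T+\bar G G^T$ is fixed by conjugation (hence lies in $M_k(\F_2)$, where its invertibility over $\F_2$ and over $\F_4$ coincide) is the right bookkeeping, and invoking the $2$-rank condition on $G$ (so that $\mathbf u\mapsto\mathbf u G$ is injective on $\F_2^k$) is precisely what is needed to convert a nonzero left null vector of $M$ into a nonzero element of $\C\cap\C^{\perp_{\rm TrH}}$ and conversely. Note, however, that the paper gives no proof of this statement at all---it is quoted from Boonniyom--Jitman---and the route the paper follows for the analogous trace Euclidean criterion (Theorem~\ref{LAG}) is different from yours: there the ``only if'' direction is the same null-space computation you use, but the ``if'' direction is established by exhibiting the explicit $\F_4$-orthogonal projection $T_{\C}(\mathbf v)={\rm Tr}(\mathbf v G^T)(GG^T+\bar G\bar G^T)^{-1}G$ and appealing to the projection lemmas (Lemmas~\ref{lemma-orthogonal projection} and~\ref{lemma-orthogonal projection iff ACD}). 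Your Gram-matrix argument is more elementary and treats both directions symmetrically, directly paralleling the classical proof of Theorem~\ref{LG}; the projection approach is heavier but has the bonus of producing an explicit complementary projection onto $\C$, which is part of the statement the paper wants in the trace Euclidean case.
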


The following lemma is straightforward from the definition of the inner products.

\begin{lem}{\em (\cite[Lemma 2.1]{DouKimLee})} \label{lem-dual-dual}
Assume $\C$ is a linear code over $\F_4$. Then
 $\C^{\perp_{{\rm TrH}}}$ is equal to
the dual of $\C$ with respect to the Hermitian inner product. Similarly,
$\C^{\perp_{{\rm TrE}}}$ is equal to the dual of $\C$ with respect to the Euclidean inner product.
\end{lem}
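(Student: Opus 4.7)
The plan is to prove the two equalities by showing both inclusions for each statement, treating the trace Hermitian case in detail and then indicating that the trace Euclidean case is identical mutatis mutandis.

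First, the easy inclusion $\C^{\perp_{\rm H}} \subseteq \C^{\perp_{\rm TrH}}$ is immediate: if $u \cdot v = \sum_i u_i \bar v_i = 0$ in $\F_4$, then applying the $\F_2$-linear trace gives $u \star v = \Tr(u \cdot v) = \Tr(0) = 0$, so $u \in \C^{\perp_{\rm TrH}}$. This inclusion uses nothing about the code structure beyond the definitions.

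The main content is the reverse inclusion $\C^{\perp_{\rm TrH}} \subseteq \C^{\perp_{\rm H}}$, and this is where the $\F_4$-linearity of $\C$ (as opposed to mere additivity) is essential. Fix $u \in \C^{\perp_{\rm TrH}}$ and $v \in \C$, set $\alpha = u \cdot v = \sum_i u_i \bar v_i \in \F_4$, and observe that since $\C$ is $\F_4$-linear we have $\beta v \in \C$ for every $\beta \in \F_4$. A direct computation (pulling $\bar\beta$ out of the conjugation) then yields
\[
0 = u \star (\beta v) = \sum_i \Tr(u_i\, \overline{\beta v_i}) = \Tr(\bar\beta\, \alpha)
\]
for every $\beta \in \F_4$. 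So the task reduces to the following nondegeneracy fact about $\F_4/\F_2$: if $\Tr(\gamma\alpha) = 0$ for all $\gamma \in \F_4$, then $\alpha = 0$. This I would verify quickly by noting that the $\F_2$-bilinear form $(\gamma,\alpha)\mapsto \Tr(\gamma\alpha)$ has matrix $\begin{pmatrix}0&1\\1&1\end{pmatrix}$ in the basis $\{1,\omega\}$, which is invertible over $\F_2$. Hence $\alpha = 0$, i.e.\ $u \cdot v = 0$, giving $u \in \C^{\perp_{\rm H}}$.

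The trace Euclidean statement follows by the same argument with $\bar v_i$ replaced by $v_i$ throughout: the easy inclusion is again obtained from $\F_2$-linearity of $\Tr$, and the reverse uses $\beta v \in \C$ to produce $\Tr(\beta \alpha') = 0$ for all $\beta \in \F_4$, where $\alpha' = \sum_i u_i v_i$, and then invokes the same nondegeneracy of $\Tr$ on $\F_4$. The only step that might be considered a genuine obstacle is the bookkeeping with conjugation in the Hermitian case, so I would write that computation out in full; everything else is formal.
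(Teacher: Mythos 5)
Your proof is correct: the easy inclusion follows from the additivity of the trace, and the reverse inclusion correctly uses the $\F_4$-linearity of $\C$ (so that $\beta v\in\C$ for all $\beta\in\F_4$) together with the nondegeneracy of the bilinear form $(\gamma,\alpha)\mapsto\Tr(\gamma\alpha)$ on $\F_4$ over $\F_2$, which is exactly where linearity rather than mere additivity is needed. The paper itself supplies no proof — it calls the lemma straightforward and cites Lemma 2.1 of Dougherty, Kim and Lee — and your argument is precisely the standard one behind that citation, so there is nothing to add.
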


\begin{cor}\label{cor-herm-LCD}
Any Hermitian LCD $[n,k,d]$ code $\C$ over $\F_4$ is an ACD $(n,2^{2k}, d)$ code over $\F_4$ with respect to the trace Hermitian inner product.
\end{cor}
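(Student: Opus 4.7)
The plan is to reduce the statement directly to Lemma 2.1 together with the definitions, because there is almost no content beyond unwinding the setup. First I would verify the numerical parameters: since $\C$ is a linear $[n,k,d]$ code over $\F_4$, its underlying additive group has cardinality $4^k = 2^{2k}$, and the (Hamming) minimum weight of $\C$ viewed as an additive code equals its minimum weight as a linear code, namely $d$. Hence as an additive object $\C$ is an $(n, 2^{2k}, d)$ code, so the only nontrivial thing to prove is that $\C$ is ACD with respect to the trace Hermitian inner product $\star$.

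Next I would apply Lemma 2.1 to replace $\C^{\perp_{\rm TrH}}$ by the Hermitian dual $\C^{\perp_{\rm H}}$. Concretely, since $\C$ is linear, the lemma gives $\C^{\perp_{\rm TrH}} = \C^{\perp_{\rm H}}$. Then
\[
\C \cap \C^{\perp_{\rm TrH}} \;=\; \C \cap \C^{\perp_{\rm H}} \;=\; \{\mathbf 0\},
\]
where the second equality uses the assumption that $\C$ is Hermitian LCD. By the definition of ACD with respect to $\star$, this is exactly the required condition.

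There is really no obstacle to overcome here; the only place one might slip is in conflating the two different notions of dual, which is precisely why Lemma 2.1 is needed as the bridge. I would make sure to state explicitly that the inclusion $\{{\bf 0}\} \subseteq \C \cap \C^{\perp_{\rm TrH}}$ is automatic and only the reverse inclusion requires the Hermitian LCD hypothesis. Optionally, one could also remark that the converse direction holds as well for linear codes, explaining why Hermitian LCD linear codes and trace Hermitian ACD linear codes coincide, which motivates the more general notion of ACD for genuinely nonlinear additive codes.
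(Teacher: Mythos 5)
Your proof is correct and follows essentially the same route as the paper: invoke Lemma~\ref{lem-dual-dual} to identify $\C^{\perp_{\rm TrH}}$ with $\C^{\perp_{\rm H}}$, conclude $\C \cap \C^{\perp_{\rm TrH}}=\{\mathbf 0\}$ from the Hermitian LCD hypothesis, and note that the parameters $(n,2^{2k},d)$ are immediate since $|\C|=4^k=2^{2k}$. Your explicit remarks about the trivial inclusion and the converse are fine additions but not needed.
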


\begin{proof}
Suppose that $\C$ is a Hermitian LCD code over $\F_4$. Then $\C \cap \C^{\perp_{{\rm H}}} = \{ {\bf 0} \}$. By Lemma~\ref{lem-dual-dual},
$\C^{\perp_{{\rm H}}}=\C^{\perp_{{\rm TrH}}}$.
Thus, $\C \cap \C^{\perp_{{\rm TrH}}}=\{{\bf 0}\}$, which means that $\C$ is ACD with respect to the trace Hermitian inner product.
 The parameters of $\C$ over $\F_4$ are obvious.
\end{proof}

\begin{cor}
Suppose $\C_2$ is a binary LCD $[n, k, d]$ code with generator
matrix $G_2$. Let $\C_2^4$ be a linear code over $\F_4$ with generator matrix $G_2$. Then by regarding $\C_2^4$ as an additive code over $\F_4$, $\C_2^4$ is an ACD $(n, 2^{2k}, d)$ code with respect to the trace Hermitian inner product.
\end{cor}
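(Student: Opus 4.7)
The plan is to apply Theorem \ref{thm-tH-ACD} directly. First I would choose an explicit generator matrix $G$ for $\C_2^4$ viewed as an additive code over $\F_4$: since $\{1,\omega\}$ is an $\F_2$-basis of $\F_4$, the $2k$ rows of
\[
G=\begin{pmatrix} G_2 \\ \omega G_2 \end{pmatrix}
\]
are $\F_2$-linearly independent (because $G_2$ has $2$-rank $k$), so $G$ generates $\C_2^4$ as an $(n,2^{2k})$ additive code.

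Next I would compute $G\star G = G\bar G^T+\bar G G^T$ blockwise. Using $\bar{G_2}=G_2$ (since the entries of $G_2$ are binary), $\omega\bar\omega=1$, and the blockwise multiplication rules, one finds
\[
G\bar G^T=\begin{pmatrix} G_2G_2^T & \bar\omega\,G_2G_2^T \\ \omega\,G_2G_2^T & G_2G_2^T\end{pmatrix},\qquad
\bar G G^T=\begin{pmatrix} G_2G_2^T & \omega\,G_2G_2^T \\ \bar\omega\,G_2G_2^T & G_2G_2^T\end{pmatrix}.
\]
Adding these in characteristic $2$ and using $\omega+\bar\omega=1$ collapses the diagonal blocks and yields
\[
G\star G=\begin{pmatrix} 0 & G_2G_2^T \\ G_2G_2^T & 0\end{pmatrix},
\]
whose determinant is (up to sign) $\det(G_2G_2^T)^2$. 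Since $\C_2$ is a binary LCD code, Theorem \ref{LG} guarantees that $G_2G_2^T$ is invertible over $\F_2$, hence $G\star G$ is invertible over $\F_4$, and Theorem \ref{thm-tH-ACD} gives that $\C_2^4$ is ACD with respect to $\star$.

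It remains to compute the parameters. The size $2^{2k}$ follows because $G$ has $2$-rank $2k$. For the minimum distance, every codeword of $\C_2^4$ has the form $c_1+\omega c_2$ with $c_1,c_2\in\C_2$; since $\{1,\omega\}$ is an $\F_2$-basis of $\F_4$, the support of $c_1+\omega c_2$ is $\mathrm{supp}(c_1)\cup\mathrm{supp}(c_2)$, so the nonzero weights of $\C_2^4$ are exactly the cardinalities of unions of supports of two codewords of $\C_2$ (not both zero). The minimum of these is attained when one summand vanishes, giving minimum weight $d$.

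No real obstacle arises; the only point requiring care is the bookkeeping of the $\omega$, $\bar\omega$ coefficients in the block matrix product, and invoking characteristic~$2$ at the right moment to obtain the clean antidiagonal form.
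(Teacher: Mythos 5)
Your proof is correct, but it takes a different route from the paper. The paper stays at the level of linear codes: since $G_2=\bar G_2$, the invertibility of $G_2G_2^T$ (Theorem~\ref{LG} applied to the binary LCD code) shows that $\C_2^4$ is a Hermitian LCD $[n,k,d]$ code over $\F_4$, and then Corollary~\ref{cor-herm-LCD} (which rests on Lemma~\ref{lem-dual-dual}, the coincidence of the Hermitian and trace Hermitian duals for linear codes) converts this into the ACD statement. You instead work directly at the additive level: you write down the explicit $2k\times n$ additive generator matrix $G=\bigl(\begin{smallmatrix}G_2\\ \omega G_2\end{smallmatrix}\bigr)$, compute $G\star G=\bigl(\begin{smallmatrix}0 & G_2G_2^T\\ G_2G_2^T & 0\end{smallmatrix}\bigr)$ using $\omega\bar\omega=1$ and $\omega+\bar\omega=1$, and invoke Theorem~\ref{thm-tH-ACD}; the block computation and the determinant argument are correct, as is your justification that $G$ has $2$-rank $2k$. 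Both arguments ultimately hinge on Theorem~\ref{LG} giving invertibility of $G_2G_2^T$. The paper's route is shorter because it reuses Corollary~\ref{cor-herm-LCD}; yours is more self-contained (it bypasses Lemma~\ref{lem-dual-dual} and the Hermitian LCD detour) and has the added merit of actually proving the minimum-distance claim via the support-union observation for $c_1+\omega c_2$ with $c_1,c_2$ binary, a point the paper dismisses as obvious.
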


\begin{proof}
Let $G_2$ be a generator matrix for $\C_2$. Then, by Theorem~\ref{LG}, $G_2 G_2^T$ is invertible over $\F_2$. Since $G_2 =\bar{G_2}$,
$G_2 \bar{G_2}^T$ is invertible over $\F_4$. Hence, by Theorem~\ref{LG}, $\C_2^4$ as a linear code over $\F_4$ is an Hermitian LCD $[n, k, d]$ code over $\F_4$. Therefore, by Corollary~\ref{cor-herm-LCD}, $\C_2^4$ is an ACD $(n, 2^{2k}, d)$ code with respect to the trace Hermitian inner product.
\end{proof}


\begin{lem}{\em(\cite{HPR})} \label{lem-2-rank}
If $A$ is a symmetric integral matrix with zero diagonal, then 2-rank$(A)$ is even.
\end{lem}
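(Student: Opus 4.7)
The plan is to reduce modulo $2$ and then apply the standard structure theorem for alternating bilinear forms over $\F_2$. Writing $\bar A$ for the entry-wise reduction of $A$ modulo $2$, the hypotheses translate into the statement that $\bar A$ is a symmetric $n\times n$ matrix over $\F_2$ with zero diagonal; since the $2$-rank of $A$ is by definition the $\F_2$-rank of $\bar A$, the lemma reduces to showing that every such matrix has even rank.

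Such a matrix $\bar A$ defines a bilinear form $\beta(x,y)=x^T\bar Ay$ on $\F_2^n$ which is alternating in the sense that $\beta(x,x)=0$ for every $x$ (a condition strictly stronger than skew-symmetry in characteristic $2$, and exactly captured by the zero-diagonal-plus-symmetry hypothesis). I would prove by induction on $n$ that the rank of any alternating form is even. The base case $\bar A=0$ is immediate. Otherwise I would pick indices $i,j$ with $\bar A_{ij}=1$ and perform a congruence $\bar A\mapsto P^T\bar AP$ (paired row and column operations) to clear every other entry in the $i$-th and $j$-th rows and columns, leaving a hyperbolic $2\times 2$ block at positions $\{i,j\}\times\{i,j\}$ together with an alternating matrix of size $n-2$ on the complementary index set. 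Applying the induction hypothesis to the latter yields some even rank $r$, and since congruence preserves rank the total equals $r+2$, which is again even.

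The main point to verify, which drives the whole argument, is that the class of symmetric zero-diagonal matrices over $\F_2$ is closed under congruence; this is what legitimizes the inductive step. The closure follows from the identity $\beta(x+\lambda y,x+\lambda y)=\beta(x,x)+\lambda^2\beta(y,y)+\lambda\bigl(\beta(x,y)+\beta(y,x)\bigr)$, which vanishes in characteristic $2$ whenever $\beta$ is alternating, so that no new diagonal entries are created by the paired operations. Beyond this observation the argument is routine linear algebra, and the result can alternatively be cited directly from \cite{HPR} as the authors do.
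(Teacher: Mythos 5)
Your proof is correct, but it goes by a different route than the paper's. You work entirely over $\F_2$: reduce $A$ modulo $2$, observe that symmetry plus zero diagonal in characteristic $2$ is exactly the alternating condition $x^T\bar Ax=0$ for all $x$, and then run the standard symplectic induction, splitting off a hyperbolic $2\times 2$ block by congruence and using that congruence preserves both the rank and the alternating property (for the latter, note it is even more immediate than the identity you invoke: $(Px)^T\bar A(Px)=0$ simply because $Px$ is again a vector and $\bar A$ is alternating). The paper instead passes to a nonsingular principal submatrix $A'$ of $A$ with the same $2$-rank, lifts it modulo $2$ to a skew-symmetric integral matrix $B$ (flip signs below the diagonal), and uses the determinant identity $\det(B)=\det(-B^T)=(-1)^k\det(B)$ to conclude that a skew-symmetric integer matrix of odd order is singular, so an odd $2$-rank would force $\det(A')\equiv 0\pmod 2$, a contradiction. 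The trade-off: the paper's argument is shorter but silently relies on the fact that the $2$-rank of a symmetric matrix is attained on a principal submatrix (taken from~\cite{HPR}) and on a lift to $\Z$, whereas your induction is self-contained over $\F_2$, needs no principal-submatrix selection and no integral lift, at the cost of carrying out the congruence reduction explicitly; it is in fact the standard proof that an alternating bilinear form has even rank, and it establishes the lemma in full.
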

\begin{proof} We give a detailed proof here since the proof in~\cite{HPR} is concise.
Recall that a principal submatrix of a square matrix $A$ is the matrix obtained by deleting any $m$ rows and the corresponding $m$ columns.

Let $A'$ be a non-singular principal submatrix of $A$ such that 2-rank$(A)$ = 2-rank$(A')$. Then $A'$ is also symmetric integral with zero diagonal. We may assume that $A' \equiv B \pmod{2}$ for some skew symmetric integral matrix $B$.
Over $\mathbb Z$, any skew symmetric matrix $B$ of odd order has determinant $0$ (since $B=-B^T$ implies that $\det(B)=-\det(B^T)=-\det(B)$, hence $\det(B)=0$).
So, if 2-rank $(A')$ is odd, then $\det(A') \equiv \det(B) \equiv 0 \pmod{2}$. This is a contradiction since $A'$ has a full rank, that is, $\det(A') \not \equiv 0 \pmod{2}$. Thus 2-rank $(A')$ is even. Therefore, 2-rank $(A)$ is even.
\end{proof}

\begin{theorem}
If $\C$ is a trace Hermitian ACD $(n, 2^k)$ code over $\F_4$ with generator matrix $G$, then $k$ is even.
\end{theorem}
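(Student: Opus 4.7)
The plan is to reduce the statement to Lemma~\ref{lem-2-rank} by showing that the matrix $M := G\bar{G}^T + \bar{G}G^T$ appearing in Theorem~\ref{thm-tH-ACD} is a symmetric $\F_2$-matrix with zero diagonal whose 2-rank equals $k$.

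First I would observe that for any $x,y \in \F_4$, the element $x\bar y + \bar x y = x\bar y + \overline{x\bar y} = \Tr(x\bar y)$ lies in $\F_2$. Applied entrywise, this shows that $M$ has all its entries in $\F_2$, so $M$ may be regarded as a $k \times k$ matrix over $\F_2$ (equivalently, as an integral matrix with entries in $\{0,1\}$). Next, symmetry is immediate: $M^T = (G\bar G^T)^T + (\bar G G^T)^T = \bar G G^T + G \bar G^T = M$. For the diagonal, the $(i,i)$ entry is $\sum_{j} (g_{ij}\bar g_{ij} + \bar g_{ij} g_{ij}) = 2\sum_j g_{ij}\bar g_{ij}$, which is $0$ in $\F_2$; so $M$ has zero diagonal.

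Now I would invoke Theorem~\ref{thm-tH-ACD}: since $\C$ is ACD with respect to the trace Hermitian inner product, the $k \times k$ matrix $M = G \star G$ is invertible over $\F_2$, hence its 2-rank equals $k$. Applying Lemma~\ref{lem-2-rank} to $M$ (viewed as a symmetric integral matrix with zero diagonal) yields that its 2-rank is even, and therefore $k$ is even.

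There is essentially no obstacle here: the only point that requires a moment's care is verifying that $G\bar G^T + \bar G G^T$ genuinely has entries in $\F_2$ (so that Lemma~\ref{lem-2-rank} applies) and that its diagonal vanishes; once this is noted, the invertibility supplied by Theorem~\ref{thm-tH-ACD} combined with Lemma~\ref{lem-2-rank} closes the argument immediately.
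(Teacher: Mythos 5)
Your proposal is correct and follows essentially the same route as the paper: identify $M = G\bar{G}^T + \bar{G}G^T$ as a symmetric matrix over $\F_2$ with zero diagonal, use Theorem~\ref{thm-tH-ACD} to get that its 2-rank is $k$, and conclude by Lemma~\ref{lem-2-rank} that $k$ is even. Your verification that the entries lie in $\F_2$ and that the diagonal vanishes simply spells out details the paper leaves implicit.
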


\begin{proof}
Let $A=G\star G=G\bar{G}^T+\bar{G}G^T$. Then $A$ is a $k\times k$ symmetric integral matrix with zero diagonal. Since $\C$ is ACD, the 2-rank of $A$ must be $k$ by Theorem~\ref{thm-tH-ACD}. By Lemma~\ref{lem-2-rank}, $k$ is even.
\end{proof}

\begin{lem} \label{lem-C4}
Let $\C$ and $\D$ are two binary $[n,k_1]$ and $[n,k_2]$ linear codes. If $\C_4=a\C + b\D$, where $a\neq b$ and $a,b \in \F_4\setminus \{0\}$, then $\C_4^{{\perp_{{\rm TrH}}}}=a\D^\perp + b\C^{\perp}$.
\end{lem}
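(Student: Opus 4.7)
The plan is to compute $(a\mathbf{c}+b\mathbf{d}) \star (a\mathbf{u}+b\mathbf{v})$ explicitly for arbitrary binary $\mathbf{c},\mathbf{d},\mathbf{u},\mathbf{v}$ and to read off the dual from the resulting formula. Since $a, b$ are distinct nonzero elements of $\F_4$, the pair $\{a, b\}$ is an $\F_2$-basis of $\F_4$; hence every element of $\F_4^n$ can be written uniquely as $a\mathbf{u}+b\mathbf{v}$ with $\mathbf{u},\mathbf{v}\in\F_2^n$, and it suffices to characterise the pairs $(\mathbf{u},\mathbf{v})$ such that $a\mathbf{u}+b\mathbf{v}$ annihilates $\C_4$ under $\star$.

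First I expand, using $\F_2$-linearity and the conjugate-linearity in the second slot,
$(a\mathbf{c}+b\mathbf{d}) \star (a\mathbf{u}+b\mathbf{v}) = \sum_i \Tr\bigl(a\bar a\, c_i u_i + a\bar b\, c_i v_i + b\bar a\, d_i u_i + b\bar b\, d_i v_i\bigr)$.
Three elementary facts about $\F_4/\F_2$ then collapse this sum. First, every nonzero element of $\F_4$ has norm one, so $a\bar a = b\bar b = 1$; combined with the observation that $\Tr$ vanishes on $\F_2$ (since $\Tr(x)=x+x^2=0$ for $x\in\F_2$), the first and fourth terms drop out. Second, a direct case check over the six ordered pairs $(a,b)\in(\F_4\setminus\{0\})^2$ with $a\neq b$ shows that $a\bar b \in \{\om,\ob\}$, and hence $\Tr(a\bar b)=\Tr(b\bar a)=1$. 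Third, $c_i v_i$ and $d_i u_i$ lie in $\F_2$, so $\Tr(a\bar b\, c_i v_i)=c_i v_i$ and $\Tr(b\bar a\, d_i u_i)=d_i u_i$. The whole expression therefore reduces to the binary scalar $\mathbf{c}\cdot\mathbf{v} + \mathbf{d}\cdot\mathbf{u}$.

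With this identity in hand, the conclusion is immediate. The element $a\mathbf{u}+b\mathbf{v}$ lies in $\C_4^{\perp_{{\rm TrH}}}$ exactly when $\mathbf{c}\cdot\mathbf{v} + \mathbf{d}\cdot\mathbf{u} = 0$ for every $\mathbf{c}\in\C$ and $\mathbf{d}\in\D$. Specialising $\mathbf{d}=\mathbf{0}$ forces $\mathbf{v}\in\C^{\perp}$, and specialising $\mathbf{c}=\mathbf{0}$ forces $\mathbf{u}\in\D^{\perp}$; conversely, these two conditions clearly make the sum vanish for all choices. Hence $\C_4^{\perp_{{\rm TrH}}} = a\D^{\perp} + b\C^{\perp}$, as claimed. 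The only non-bookkeeping step is the small case check that $\Tr(a\bar b)=1$ for every admissible pair $(a,b)$; once that is recorded, the rest is a straightforward trace-bilinear computation.
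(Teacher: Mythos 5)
Your proof is correct, and it takes a somewhat different route from the paper's. You compute the trace Hermitian pairing of a general codeword $a\mathbf{c}+b\mathbf{d}$ against a general vector $a\mathbf{u}+b\mathbf{v}$ (using that $\{a,b\}$ is an $\F_2$-basis of $\F_4$, so this parametrises all of $\F_4^n$), reduce it to the binary expression $\mathbf{c}\cdot\mathbf{v}+\mathbf{d}\cdot\mathbf{u}$ via $\Tr(a\bar a)=\Tr(b\bar b)=0$ and $\Tr(a\bar b)=\Tr(b\bar a)=1$, and then read off both inclusions at once by specialising $\mathbf{c}=\mathbf{0}$ and $\mathbf{d}=\mathbf{0}$. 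The paper instead only verifies the easy inclusion $a\D^{\perp}+b\C^{\perp}\subseteq \C_4^{\perp_{\rm TrH}}$ by a similar expansion (there the cross terms are killed because $\mathbf{d}'\cdot\mathbf{d}=\mathbf{c}'\cdot\mathbf{c}=0$, so the value $\Tr(a\bar b)$ never matters), and then upgrades it to equality by a cardinality count: $|\C_4|=2^{k_1+k_2}$, hence $|\C_4^{\perp_{\rm TrH}}|=2^{2n-k_1-k_2}=|a\D^{\perp}+b\C^{\perp}|$. Your version buys a complete, counting-free characterisation --- it exhibits the trace Hermitian form on the decomposition $\F_4^n=a\F_2^n+b\F_2^n$ as the swapped binary pairing, for which the case check $\Tr(a\bar b)=1$ is genuinely needed (it is what rules out the dual being larger); the paper's version keeps the computation minimal at the cost of the dimension argument, which in turn relies on the same unique-representation fact you invoke. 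Both are valid; yours is arguably more self-contained and explains structurally why the answer is $a\D^{\perp}+b\C^{\perp}$ with $\C$ and $\D$ swapped.
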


\begin{proof}
Let $\textbf u \in a\D^\perp+b\C^{\perp}$; then there exist $\textbf d'\in \D^{\perp}$ and $\textbf c'\in \C^{\perp}$ such that $\textbf u=a\textbf d'+b\textbf c'$. For $\textbf v=a\textbf c+b\textbf d \in \C_4$, where $\textbf c\in \C$ and $\textbf d\in \D$, we have

\begin{eqnarray*}
 \textbf u\star \textbf v&=&(a\textbf d'+b\textbf c')\star (a\textbf c+b\textbf d)\\
&=&(a\textbf {d}'+b\textbf {c}')\cdot(\bar{a}\textbf c+\bar{b}\textbf d)+(\bar{a}\textbf {d}'+\bar{b}\textbf {c}')\cdot(a\textbf c+b\textbf d)\\
&=&a\bar{a}\textbf {d}'\cdot\textbf c+a\bar{b}\textbf {d}'\cdot\textbf d+b\bar{a}\textbf {c}'\cdot\textbf c+b\bar{b}\textbf {c}'\cdot\textbf d+\bar{a}a\textbf {d}'\cdot \textbf c+\bar{a}b\textbf d'\cdot\textbf d+\bar{b}a\textbf c'\cdot\textbf c+\bar{b}b\textbf {c}'\cdot\textbf d\\
&=&(a\bar{b}+\bar{a}b)\textbf {d}'\cdot\textbf d+(b\bar{a}+\bar{b}a)\textbf {c}'\cdot\textbf c\\
&=&0.
\end{eqnarray*}
Hence, $a\D^{\perp}+b\C^{\perp}\subseteq \C_4^{{\perp_{{\rm TrH}}}}$.

Since $a\neq b$, $|\C_4|=|\C| |\D|=2^{k_{1}}\cdot 2^{k_{2}}=2^{k_{1}+k_{2}}$, then $\C_4^{{\perp_{{\rm TrH}}}}=\frac{2^{2n}}{|\C_4|}=2^{2n-(k_{1}+k_{2})}=2^{2n-k_{1}-k_{2}}$. And $|a\D^{\perp}+b\C^{\perp}|=2^{n-k_{1}}\cdot 2^{n-k_{2}}=2^{2n-k_{1}-k_{2}}$. Therefore, $\C_4^{{\perp_{{\rm TrH}}}}=a\D^{\perp} + b\C^{\perp}$.

\end{proof}

\begin{proposition}\label{prop-ACD from binary self-dual codes}
If $\C$ is a self-dual $[2n,n]$ binary code, let $\D$ be a binary linear code and $\F_2^{2n}=\C\oplus \D$, then $\C_4=a\C + b\D$ is an ACD $(2n,2^{2n})$ code over $\F_4$ with respect to the trace Hermitian inner product, where $a\neq b$ and $a,b\in \F_4\setminus \{ 0\}$.
\end{proposition}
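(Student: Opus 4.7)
The plan is to reduce the claim to the structure of $\C_4^{\perp_{\rm TrH}}$ provided by Lemma \ref{lem-C4}, and then to analyze the intersection purely in terms of the binary components.

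First I would observe that since $\F_2^{2n} = \C \oplus \D$ and $\dim \C = n$, necessarily $\dim \D = n$, so $|\C_4| = 2^n \cdot 2^n = 2^{2n}$ and the length and cardinality of $\C_4$ are as claimed. Applying Lemma \ref{lem-C4} with $\C$ self-dual gives
\[
\C_4^{\perp_{\rm TrH}} = a\D^\perp + b\C^\perp = a\D^\perp + b\C.
\]
Thus the task is to show $(a\C + b\D) \cap (a\D^\perp + b\C) = \{\mathbf 0\}$.

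Next I would exploit the fact that any two distinct nonzero elements of $\F_4$ are $\F_2$-linearly independent (for if $\alpha a + \beta b = 0$ with $\alpha,\beta \in \F_2$ not both zero, one gets $a = b$ or $a = 0$ or $b = 0$). Hence $\{a,b\}$ is an $\F_2$-basis of $\F_4$, so any vector $\mathbf v \in \F_4^{2n}$ admits a unique decomposition $\mathbf v = a\mathbf x + b\mathbf y$ with $\mathbf x,\mathbf y \in \F_2^{2n}$. Writing an arbitrary element of the intersection in both ways, $a\mathbf c + b\mathbf d = a\mathbf d' + b\mathbf c'$ with $\mathbf c \in \C$, $\mathbf d \in \D$, $\mathbf d' \in \D^\perp$, $\mathbf c' \in \C$, I would equate binary components to obtain $\mathbf c = \mathbf d'$ and $\mathbf d = \mathbf c'$.

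This leaves two things to verify: $\C \cap \D = \{\mathbf 0\}$, which is immediate from $\F_2^{2n} = \C \oplus \D$ (forcing $\mathbf d = \mathbf 0$); and $\C \cap \D^\perp = \{\mathbf 0\}$ (forcing $\mathbf c = \mathbf 0$). For the second, any $\mathbf x \in \C \cap \D^\perp$ satisfies $\mathbf x \cdot \mathbf c'' = 0$ for all $\mathbf c'' \in \C$ (since $\mathbf x \in \C = \C^\perp$) and $\mathbf x \cdot \mathbf d'' = 0$ for all $\mathbf d'' \in \D$, hence $\mathbf x$ is orthogonal to all of $\F_2^{2n} = \C + \D$, forcing $\mathbf x = \mathbf 0$. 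Combining, the only element of $\C_4 \cap \C_4^{\perp_{\rm TrH}}$ is $\mathbf 0$, so $\C_4$ is ACD.

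The main obstacle I anticipate is the verification $\C \cap \D^\perp = \{\mathbf 0\}$, which is where the self-duality of $\C$ enters in an essential way (it would fail for a general decomposition $\F_2^{2n} = \C \oplus \D$ if $\C$ were not self-dual). Everything else is bookkeeping between $\F_2$- and $\F_4$-descriptions.
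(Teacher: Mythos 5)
Your proposal is correct and follows essentially the same route as the paper's proof: apply Lemma~\ref{lem-C4} with $\C=\C^{\perp}$ to get $\C_4^{\perp_{\rm TrH}}=a\D^{\perp}+b\C$, write an element of the intersection in both forms, use the uniqueness of the decomposition over the $\F_2$-basis $\{a,b\}$ of $\F_4$, and reduce to $\C\cap\D=\{\mathbf 0\}$ and $\C\cap\D^{\perp}=\{\mathbf 0\}$ (the paper derives the latter via $\C^{\perp}\cap\D^{\perp}=(\C\oplus\D)^{\perp}=\{\mathbf 0\}$, which is the same orthogonality argument you give directly). If anything, your version is slightly cleaner in concluding $\mathbf c=\mathbf d=\mathbf 0$ outright rather than asserting both components of a nonzero intersection element are nonzero.
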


\begin{proof}
Since $\C$ is self-dual, $\C=\C^{\perp}$. By $\F_2^{2n}=\C\oplus \D$, we know that
$$\C\cap \D=\{\textbf 0\},\ \C^{\perp}\cap \D^{\perp}=(\C\oplus \D)^{\perp}=(\F_2^{2n})^{\perp}=\{\textbf 0\},$$
then $\C\cap \D^{\perp}=\{\textbf 0\}$.
By Lemma~\ref{lem-C4}, $\C_4^{\perp_{{\rm TrH}}}=a\D^{\perp}+b\C^{\perp}=a\D^{\perp}+b\C$.

If there exists ${\textbf 0}\neq {\textbf v} \in \C_{4}\cap \C_{4}^{\perp_{{\rm TrH}}}$,
then there are ${\textbf c}\in \C, {\textbf d}\in \D$, and ${\textbf c}, {\textbf d}$ are nonzero such that ${\textbf v}=a {\textbf c}+ b {\textbf d}$.
Similarly, there are ${\textbf c}'\in \D^{\perp}$, ${\textbf d}'\in \C$,
and ${\textbf c}', {\textbf d}'$are nonzero such that${\textbf v}=a {\textbf c}'+b {\textbf d}'$.
Hence we have $a {\textbf c}+b {\textbf d}=a {\textbf c}'+b {\textbf d}'$, which implies that ${\textbf c}={\textbf c}',{\textbf d}={\textbf d}'$. Then we have ${\textbf c}\in \C \cap \D^{\perp}$, ${\textbf d}\in \D \cap \C$, where ${\textbf c}, {\textbf d}$ are nonzero, which is a contradiction. Therefore, $\C_{4}$ is an ACD code.
\end{proof}

\begin{cor} Suppose that $\C$ is a self-dual $[2n,n]$ binary code and $\D$ is a binary linear code such that $\F_2^{2n}=\C\oplus \D$.
Let $\C_4=a\C + b\D$. Then its
minimum distance $d(\C_4)$ is  equal to $\min\{d(\C), d(\D)\}$.
\end{cor}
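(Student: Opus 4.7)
The plan is to compute the weight of an arbitrary codeword of $\C_4$ coordinate by coordinate, using the additive structure of $\F_4$. A generic element of $\C_4$ has the form $\mathbf{v} = a\mathbf{c} + b\mathbf{d}$ with $\mathbf{c}\in\C$ and $\mathbf{d}\in\D$, where $\mathbf{c},\mathbf{d}$ have entries in $\F_2$. The crucial observation is that, since $a \neq b$ are both nonzero in the characteristic-$2$ field $\F_4$, the four elements $0,\,a,\,b,\,a+b$ are pairwise distinct; in particular $a+b \neq 0$. Thus for each coordinate $i$, the entry $ac_i + bd_i$ vanishes precisely when $(c_i,d_i) = (0,0)$, and so
\[
\operatorname{supp}(\mathbf{v}) \;=\; \operatorname{supp}(\mathbf{c}) \cup \operatorname{supp}(\mathbf{d}).
\]

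For the lower bound $d(\C_4)\ge \min\{d(\C),d(\D)\}$, I would do a three-case analysis on a nonzero $\mathbf{v} = a\mathbf{c}+b\mathbf{d}$. If $\mathbf{d}=\mathbf{0}$ then $\mathbf{c}\neq\mathbf{0}$ and $\wt(\mathbf{v})=\wt(\mathbf{c})\ge d(\C)$; symmetrically if $\mathbf{c}=\mathbf{0}$. If both $\mathbf{c}$ and $\mathbf{d}$ are nonzero, the support formula above gives
\[
\wt(\mathbf{v}) \;\ge\; \max\{\wt(\mathbf{c}),\wt(\mathbf{d})\} \;\ge\; \min\{d(\C),d(\D)\}.
\]

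For the matching upper bound, take a minimum-weight codeword $\mathbf{c}_0 \in \C$ and observe that $a\mathbf{c}_0 = a\mathbf{c}_0 + b\cdot\mathbf{0} \in \C_4$ has weight $\wt(\mathbf{c}_0) = d(\C)$ since multiplication by the nonzero scalar $a$ preserves support; the analogous construction with a minimum-weight word of $\D$ gives a codeword of weight $d(\D)$. Combining the two bounds yields $d(\C_4) = \min\{d(\C),d(\D)\}$. There is no serious obstacle here; the only subtlety worth flagging explicitly is the characteristic-$2$ fact that $a+b \neq 0$, which is what prevents cancellation in the overlap of the supports of $\mathbf{c}$ and $\mathbf{d}$.
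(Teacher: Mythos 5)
Your proof is correct and takes essentially the same route as the paper's: the lower bound rests on the fact that $a$, $b$, and $a+b$ are all nonzero, giving $\wt(a\mathbf{c}+b\mathbf{d})\ge\max\{\wt(\mathbf{c}),\wt(\mathbf{d})\}$, and the upper bound comes from the codewords $a\mathbf{c}_0$ and $b\mathbf{d}_0$ of minimum weight. You merely make explicit the coordinate-wise support-union argument and the degenerate cases $\mathbf{c}=\mathbf{0}$ or $\mathbf{d}=\mathbf{0}$, which the paper leaves implicit.
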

\begin{proof}
Obviously, we have $d(\C_4)\leq \min\{d(\C),d(\D)\}$. On the other hand, for any nonzero codeword $a\textbf u+b\textbf v \in \C_4$ with $\textbf u \in \C$ and $\textbf v\in \D$, since $a,b\in \F_4\setminus \{0\}$ and $a \neq b$, ${\rm wt}(a\textbf u+b\textbf v)\geq \max \{{\rm wt}(a\textbf u),{\rm wt}(b\textbf v)\} \geq \min\{d(\C),d(\D)\}$. Therefore, $d(\C_4)=\min\{d(\C), d(\D)\}$.
\end{proof}



\begin{remark}
Proposition~\ref{prop-ACD from binary self-dual codes} shows that we can get ACD codes over $\F_4$ from binary linear self-dual codes.
\end{remark}

\begin{example}
Let $\C$ be a binary $[4,2,2]$ code with generator matrix $G_1$ of the form
$$
G_1=\begin{bmatrix}
1 & 1 & 0 & 0 \\
0 & 0 & 1 & 1
\end{bmatrix}.
$$
Let $\D$ be a binary $[4,2,2]$ code with generator matrix $G_2$ of the form
$$
G_2=\begin{bmatrix}
1 & 1 & 0 & 1 \\
0 & 1 & 1 & 1
\end{bmatrix}.
$$
It is easy to check that $\C$ is self-dual and $\F_2^4=\C\oplus \D$. By Proposition~\ref{prop-ACD from binary self-dual codes}, $\C_4=\C+\omega \D$ is an $(4,2^4,2)$ ACD code over $\F_4$ with generator matrix $G$ of the form
$$
G=\begin{bmatrix}
1 & 1 & 0 & 0 \\
0 & 0 & 1 & 1 \\
\omega & \omega & 0 & \omega \\
0 & \omega & \omega & \omega
\end{bmatrix}.
$$
\end{example}



\medskip

\section{ACD codes with respect to the trace Euclidean inner product $\diamond$}

In this section, we construct ACD codes with respect to the trace Euclidean inner product.

\begin{lem}\label{cor-euc-LCD}
Any Euclidean LCD $[n,k,d]$ code $\C$ over $\F_4$ is an ACD $(n,2^{2k}, d)$ code over $\F_4$ with respect to the trace Euclidean inner product.
\end{lem}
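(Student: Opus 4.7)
The plan is to mirror exactly the argument already given for the Hermitian case in Corollary~\ref{cor-herm-LCD}, substituting the Euclidean inner product for the Hermitian one throughout. The key ingredient is the second half of Lemma~\ref{lem-dual-dual}, which states that for a linear code $\C$ over $\F_4$, the trace Euclidean dual $\C^{\perp_{{\rm TrE}}}$ coincides with the Euclidean dual $\C^{\perp_{{\rm E}}}$.

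First I would assume $\C$ is Euclidean LCD, which by definition means $\C \cap \C^{\perp_{{\rm E}}} = \{{\bf 0}\}$. Then I would invoke Lemma~\ref{lem-dual-dual} to rewrite $\C^{\perp_{{\rm E}}} = \C^{\perp_{{\rm TrE}}}$, giving immediately $\C \cap \C^{\perp_{{\rm TrE}}} = \{{\bf 0}\}$. By the definition recalled in Section~2, this is precisely the statement that $\C$ is ACD with respect to the trace Euclidean inner product.

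It remains to verify the parameters. Viewed as an additive code, the linear $[n,k]$ code $\C$ over $\F_4$ has $|\C| = 4^k = 2^{2k}$ codewords, so it is an $(n, 2^{2k})$ additive code. The minimum weight $d$ is an intrinsic property of the set of codewords and is unaffected by whether one regards $\C$ as a linear $\F_4$-code or as an additive $\F_2$-module, so the minimum distance is preserved.

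There is no real obstacle here; the proof is a one-line application of Lemma~\ref{lem-dual-dual} together with a parameter count, exactly parallel to Corollary~\ref{cor-herm-LCD}.
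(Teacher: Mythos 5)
Your proposal is correct and follows essentially the same route as the paper's own proof: both deduce $\C^{\perp_{{\rm E}}}=\C^{\perp_{{\rm TrE}}}$ from Lemma~\ref{lem-dual-dual} and conclude $\C\cap\C^{\perp_{{\rm TrE}}}=\{{\bf 0}\}$, with the parameters $(n,2^{2k},d)$ being immediate. Your explicit verification of the parameter count is slightly more detailed than the paper's ``obvious,'' but the argument is the same.
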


\begin{proof}
Suppose that $\C$ is an Euclidean LCD code over $\F_4$. Then $\C \cap \C^{\perp_{{\rm E}}} = \{ {\bf 0} \}$, where $\C^{\perp_{{\rm E}}}$ denotes the dual of $\C$ under the Euclidean inner product. By Lemma~\ref{lem-dual-dual},
$\C^{\perp_{{\rm E}}}=\C^{\perp_{{\rm TrE}}}$.
Thus, $\C \cap \C^{\perp_{{\rm TrE}}}=\{{\bf 0}\}$, which means that $\C$ is ACD with respect to the trace Euclidean inner product.
 The parameters of $\C$ over $\F_4$ are obvious.
\end{proof}

We want a characterization of an ACD code with respect to the Euclidean inner product in terms of its generator matrix. We follow the idea from~\cite{BooJit}.

\begin{definition}{(\cite{BooJit})}\label{LP}
Let $V$ be an inner product space over a field $\F_q$. An $\mathbb{F}_q$-linear map $T:V\rightarrow V$ is called an $\mathbb{F}_q$-orthogonal projection with respect to the prescribed inner product $< \cdot,\cdot >$ if \\
 (i) $T^2=T$, and \\
 (ii) $<\textbf u,\textbf v>=0$ for all $\textbf u \in {\rm Im}(T)$ and $\textbf v\in {\rm Ker}(T)$.

\end{definition}

\begin{lem}\label{UD}
Using the notation of Definition 2, if $\langle \cdot,\cdot \rangle$ is nondegenerate, then ${\rm Ker}(T)$ is the dual of ${\rm Im}(T)$ under $\langle \cdot,\cdot \rangle$.
\end{lem}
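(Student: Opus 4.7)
The plan is to prove the set equality ${\rm Ker}(T) = {\rm Im}(T)^{\perp}$ by establishing one inclusion and then matching dimensions, where throughout the dual is taken with respect to $\langle \cdot, \cdot \rangle$. The first step is to note that condition (ii) of Definition~2 immediately yields ${\rm Ker}(T) \subseteq {\rm Im}(T)^{\perp}$: if $v \in {\rm Ker}(T)$, then $\langle u, v \rangle = 0$ for every $u \in {\rm Im}(T)$, so $v$ lies in the orthogonal of ${\rm Im}(T)$.

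Next I would extract a dimension count from the idempotence condition $T^2 = T$. A standard argument shows that $V = {\rm Im}(T) \oplus {\rm Ker}(T)$: for any $v \in V$, write $v = Tv + (v - Tv)$, where $Tv \in {\rm Im}(T)$ and $v - Tv \in {\rm Ker}(T)$ since $T(v - Tv) = Tv - T^{2} v = 0$; moreover, if $w = Tu$ also lies in ${\rm Ker}(T)$, then $w = Tu = T^{2} u = T(Tu) = T(w) = 0$, so the sum is indeed direct. Consequently $\dim V = \dim {\rm Im}(T) + \dim {\rm Ker}(T)$.

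Finally I would invoke nondegeneracy of $\langle \cdot, \cdot \rangle$: for any subspace $W \subseteq V$ one has the familiar formula $\dim W + \dim W^{\perp} = \dim V$, which comes from the fact that a nondegenerate bilinear form induces an isomorphism $V \to V^{*}$ and hence identifies $W^{\perp}$ with the annihilator of $W$ in $V^{*}$. Applying this with $W = {\rm Im}(T)$ gives $\dim {\rm Im}(T)^{\perp} = \dim {\rm Ker}(T)$, and together with the inclusion from the first step this forces the equality ${\rm Ker}(T) = {\rm Im}(T)^{\perp}$. The argument is essentially standard linear algebra; the only point worth flagging is that nondegeneracy is indispensable, since otherwise the dimension formula $\dim W^{\perp} = \dim V - \dim W$ can fail and the conclusion need not hold.
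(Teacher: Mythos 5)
Your proof is correct and follows essentially the same route as the paper's: the inclusion ${\rm Ker}(T)\subseteq {\rm Im}(T)^{\perp}$ from condition (ii), combined with the dimension count $\dim {\rm Im}(T)^{\perp}=\dim V-\dim {\rm Im}(T)=\dim {\rm Ker}(T)$ coming from nondegeneracy, forces equality. The only cosmetic difference is that you derive $\dim V=\dim {\rm Im}(T)+\dim {\rm Ker}(T)$ from the idempotence $T^{2}=T$ via the decomposition $V={\rm Im}(T)\oplus{\rm Ker}(T)$, whereas the paper simply invokes rank--nullity, which needs no idempotence.
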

\begin{proof}
From basic linear algebra, ${\rm dim}_{\F_q}({\rm Ker}(T))={\rm dim}_{\F_q}(V)-{\rm dim}_{\F_q}({\rm Im}(T))$. Let $\D$ be the dual of ${\rm Im}(T)$ under $\langle \cdot,\cdot \rangle$.  As $\langle \cdot,\cdot \rangle$ is nondegenerate, ${\rm dim}_{\F_q}(\D)={\rm dim}_{\F_q}(V)-{\rm dim}_{\F_q}({\rm Im}(T))$. Thus ${\rm Ker}(T)$ and $\D$ have the same dimension. By part (ii) of Definition 2, ${\rm Ker}(T)\subseteq \D$. As ${\rm Ker}(T)$ and $\D$ have the same dimension, they are equal.
\end{proof}

\begin{lem}{\label{lemma-orthogonal projection}}

 Let $\C$ be a linear code of length $n$ over $\F_4$ and let $T:\F_4^n\rightarrow \F_4^n$ be an $\F_4$-linear map. Then $T$ is an $\F_4$-orthogonal projection with respect to the trace Euclidean inner product onto $\C$ if and only if
\begin{eqnarray*}
T(\textbf v)=
\begin{cases}
 \textbf v & if \ \textbf v\in \C \\
 \textbf 0 &  if \ \textbf v\in {\C}^{\perp_{{\rm TrE}}}.
 \end{cases}
\end{eqnarray*}
\end{lem}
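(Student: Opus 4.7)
The plan is to prove the biconditional by establishing the forward and backward implications separately, relying on Lemma~\ref{UD} and Lemma~\ref{lem-dual-dual} to bridge the $\F_4$-linear structure with the $\F_2$-valued trace Euclidean form.

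For the forward direction, I assume $T$ is an $\F_4$-orthogonal projection with respect to $\diamond$ onto $\C$, so in particular ${\rm Im}(T)=\C$. First I would handle $\textbf v \in \C$: writing $\textbf v = T(\textbf w)$ for some $\textbf w$, the idempotency $T^2=T$ from Definition~\ref{LP} immediately gives $T(\textbf v)=T^2(\textbf w)=T(\textbf w)=\textbf v$. Next, for $\textbf v \in \C^{\perp_{{\rm TrE}}}$, I would invoke Lemma~\ref{UD}, which identifies ${\rm Ker}(T)$ with the dual of ${\rm Im}(T)$ under the prescribed inner product. Since ${\rm Im}(T)=\C$ and the dual of $\C$ under $\diamond$ is precisely $\C^{\perp_{{\rm TrE}}}$ by definition, this forces ${\rm Ker}(T)=\C^{\perp_{{\rm TrE}}}$, whence $T(\textbf v)=\textbf 0$.

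For the backward direction, suppose $T$ satisfies the case-based formula. A quick observation is that any $\textbf v \in \C \cap \C^{\perp_{{\rm TrE}}}$ would have to satisfy both $T(\textbf v)=\textbf v$ and $T(\textbf v)=\textbf 0$, so $\C \cap \C^{\perp_{{\rm TrE}}}=\{\textbf 0\}$. Combined with Lemma~\ref{lem-dual-dual}, which says that $\C^{\perp_{{\rm TrE}}}$ coincides with the Euclidean dual of $\C$ and hence has $\F_4$-dimension $n-k$, I obtain the decomposition $\F_4^n=\C \oplus \C^{\perp_{{\rm TrE}}}$. Any $\textbf v\in \F_4^n$ then has a unique splitting $\textbf v=\textbf c+\textbf c'$ with $\textbf c\in \C$ and $\textbf c'\in \C^{\perp_{{\rm TrE}}}$, and $\F_4$-linearity of $T$ together with the hypothesis gives $T(\textbf v)=\textbf c$. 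From this one reads off ${\rm Im}(T)=\C$, the idempotency $T^2(\textbf v)=T(\textbf c)=\textbf c=T(\textbf v)$, and, for any $\textbf v \in {\rm Ker}(T)$, the fact that $\textbf c=\textbf 0$ and hence $\textbf v=\textbf c'\in \C^{\perp_{{\rm TrE}}}$, which yields $\textbf u\diamond \textbf v=0$ for all $\textbf u \in {\rm Im}(T)=\C$; these are exactly conditions (i) and (ii) of Definition~\ref{LP}.

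The main subtlety I expect concerns the mismatch between the $\F_4$-linear structure of $T$, $\C$, and $\C^{\perp_{{\rm TrE}}}$ on the one hand, and the $\F_2$-valued inner product $\diamond$ on the other. When applying Lemma~\ref{UD} the cleanest viewpoint is to regard $\F_4^n$ as a $2n$-dimensional $\F_2$-space on which $T$ is $\F_2$-linear (which it is, being $\F_4$-linear) and $\diamond$ is a nondegenerate $\F_2$-bilinear form; the $\F_2$-dimensions of ${\rm Ker}(T)$ and $\C^{\perp_{{\rm TrE}}}$ then match (both equal $2(n-k)$), so the containment provided by part (ii) of Definition~\ref{LP} upgrades to an equality of $\F_4$-subspaces. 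A parallel care is needed in the dimension-count that yields $\F_4^n=\C \oplus \C^{\perp_{{\rm TrE}}}$ in the backward direction, which is the only place where Lemma~\ref{lem-dual-dual} is essential.
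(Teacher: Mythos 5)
Your proof is correct and follows essentially the same route as the paper: the forward direction is identical (idempotency for $\textbf v\in\C$, Lemma~\ref{UD} for $\textbf v\in\C^{\perp_{{\rm TrE}}}$), and the backward direction rests, as in the paper, on the decomposition $\F_4^n=\C\oplus\C^{\perp_{{\rm TrE}}}$ obtained from the trivial intersection together with the dimension count supplied by Lemma~\ref{lem-dual-dual}. The only (harmless) variations are that you verify condition (ii) directly from the splitting, i.e.\ ${\rm Ker}(T)\subseteq\C^{\perp_{{\rm TrE}}}$, where the paper instead shows $\C^{\perp_{{\rm TrE}}}\subseteq{\rm Ker}(T)$ and equates dimensions, and that you make explicit the $\F_2$-versus-$\F_4$ bookkeeping that the paper leaves implicit.
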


\begin{proof}
Suppose that $T:\F_4^n\rightarrow \F_4^n$ is an $\F_4-$orthogonal projection with respect to the trace Euclidean inner product onto $\C$.  Let ${\bf v}\in \C={\rm Im}(T)$. Then there exists ${\bf x}\in \F_4^n$ such that ${\bf v}=T({\bf x})$. So ${\bf v}=T({\bf x})=T^2({\bf x})=T(T({\bf x}))=T({\bf v})$. Now let ${\bf v}\in \C^{\perp_{{\rm TrE}}}$; then $T({\bf v})={\bf 0}$ by Lemma~\ref{UD}.

Conversely, assume that
\begin{eqnarray*}
T(\textbf v)=
\begin{cases}
 \textbf v & if \ \textbf v\in \C \\
 \textbf 0 &  if \ \textbf v\in {\C}^{\perp_{{\rm TrE}}}.
 \end{cases}
\end{eqnarray*}
Since $T$ is a function, $\C \cap\C^{\perp_{{\rm TrE}}}=\{{\bf 0}\}$ implying $\F_4^n=\C\oplus {\C}^{\perp_{{\rm TrE}}}$. If ${\bf v}\in \C$, $T^2({\bf v})=T(T({\bf v}))=T({\bf v})={\bf v}$; if ${\bf v}\in {\C}^{\perp_{{\rm TrE}}}$, $T^2({\bf v})=T(T({\bf v}))=T({\bf 0})={\bf 0}=T({\bf v})$. So $T^2=T$ on $\C$ and on ${\C}^{\perp_{{\rm TrE}}}$, and hence on $\C\oplus {\C}^{\perp_{{\rm TrE}}}=\F_4^n$ by linearity, verifying part (i) of Definition 2.  Also ${\rm Im}(T)=\C$.  As in the proof of Lemma~\ref{UD}, ${\rm dim}_{\F_4}(\C^{\perp_{{\rm TrE}}})={\rm dim}_{\F_4}({\rm Ker}(T))$. As $T({\bf v})={\bf 0}$ for ${\bf v}\in \C^{\perp_{{\rm TrE}}}$, $\C^{\perp_{{\rm TrE}}} \subseteq{\rm Ker}(T)$ imply $\C^{\perp_{{\rm TrE}}}={\rm Ker}(T)$, verifying part (ii) of Definition~\ref{LP}.
\end{proof}

\begin{lem}\label{lemma-orthogonal projection iff ACD}
Let $\C$ be a linear code of length $n$ over $\F_4$. Then $\C$ is ACD with respect to the trace Euclidean inner product if and only if there exists an $\F_4$-orthogonal projection with respect to the trace Euclidean inner product from $\F_4^n$ onto $\C$.
\end{lem}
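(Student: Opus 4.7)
The statement is a standard linear-algebra equivalence, and both directions will be essentially immediate once we have assembled the right facts from the preceding lemmas.

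For the forward direction, the plan is to assume $\C \cap \C^{\perp_{{\rm TrE}}}=\{{\bf 0}\}$ and construct the projection explicitly. Since $\C$ is linear, Lemma~\ref{lem-dual-dual} identifies $\C^{\perp_{{\rm TrE}}}$ with the Euclidean dual of $\C$ as an $\F_4$-linear code, so $\dim_{\F_4}\C + \dim_{\F_4}\C^{\perp_{{\rm TrE}}} = n$. Combined with the trivial intersection, this gives the internal direct sum decomposition $\F_4^n = \C \oplus \C^{\perp_{{\rm TrE}}}$. Every ${\bf v}\in\F_4^n$ then has a unique decomposition ${\bf v}={\bf c}+{\bf c}^\perp$ with ${\bf c}\in\C$ and ${\bf c}^\perp\in\C^{\perp_{{\rm TrE}}}$, so setting $T({\bf v})={\bf c}$ defines a well-defined $\F_4$-linear map. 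By construction $T$ satisfies $T({\bf v})={\bf v}$ for ${\bf v}\in\C$ and $T({\bf v})={\bf 0}$ for ${\bf v}\in\C^{\perp_{{\rm TrE}}}$, so Lemma~\ref{lemma-orthogonal projection} concludes that $T$ is an $\F_4$-orthogonal projection onto $\C$ with respect to $\diamond$.

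For the converse, I would suppose there exists such a projection $T$ onto $\C$. Applying Lemma~\ref{lemma-orthogonal projection} in the other direction, $T$ must act as the identity on $\C$ and as the zero map on $\C^{\perp_{{\rm TrE}}}$. Hence for any ${\bf v}\in \C\cap \C^{\perp_{{\rm TrE}}}$ we obtain ${\bf v}=T({\bf v})={\bf 0}$, proving $\C \cap \C^{\perp_{{\rm TrE}}}=\{{\bf 0}\}$, i.e.\ that $\C$ is ACD with respect to $\diamond$.

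The only point requiring any thought is the dimension count used in the forward direction: without it one cannot deduce that the trivial intersection already forces a direct-sum decomposition of all of $\F_4^n$. The appeal to Lemma~\ref{lem-dual-dual}, which identifies the trace-Euclidean dual with the ordinary Euclidean dual for linear codes, handles this cleanly. Everything else is bookkeeping driven by the characterization of orthogonal projections already recorded in Lemma~\ref{lemma-orthogonal projection}.
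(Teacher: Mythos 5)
Your proof is correct and follows essentially the same route as the paper: both directions reduce to the characterization of orthogonal projections in Lemma~\ref{lemma-orthogonal projection}, with the converse-to-projection direction built from the decomposition $\F_4^n=\C\oplus\C^{\perp_{{\rm TrE}}}$. The only difference is that you make explicit, via Lemma~\ref{lem-dual-dual} and a dimension count, why the trivial intersection forces this direct sum, a point the paper's proof leaves implicit when it asserts the unique decomposition $\textbf v=\textbf u+\textbf w$.
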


\begin{proof}
Let $T_{\small{\C}}$ is an $\F_4$-orthogonal projection with respect to the trace Euclidean inner product from $\F_4^n$ onto $\C$. By Lemma~\ref{lemma-orthogonal projection}, it follows that,
\begin{eqnarray*}
T_{\small{\C}}({\textbf v})=
\begin{cases}
 \textbf v & if \ \textbf v\in \C \\
 \textbf 0 &  if \ \textbf v\in {\C}^{\perp_{{\rm TrE}}}.
 \end{cases}
\end{eqnarray*}

Assume that $\C$ is not ACD with respect to the trace Euclidean inner product. Then there exists $\textbf u\neq \textbf 0$ such that $\textbf u\in \C\cap \C^{\perp_{{\rm TrE}}}$. Hence, $ \textbf u=T_{\small{\C}}(\textbf u)=\textbf 0$, which is a contradiction. Therefore, $\C$ is ACD with respect to the trace Euclidean inner product

Conversely, assume that $\C$ is ACD with respect to the trace Euclidean inner product. Let $\textbf v\in \F_4^n$, then there exists a unique pair $\textbf u\in \C$ and $\textbf w\in \C^{\perp_{{\rm TrE}}}$ such that $\textbf v=\textbf u+\textbf w$.
Defined a map $T_{\small{\C}}:\F_4^n\rightarrow \F_4^n$ by $T_{\small{\C}}(\textbf v)=\textbf u$. Clearly, $T_{\small{\C}}$ is an $\F_4$-linear map such that:
\begin{eqnarray*}
T_{\small{\C}}({\textbf v})=
\begin{cases}
 \textbf v & if \ \textbf v\in \C \\
 \textbf 0 &  if \ \textbf v\in {\C}^{\perp_{{\rm TrE}}}.
 \end{cases}
\end{eqnarray*}
Hence, by Lemma~\ref{lemma-orthogonal projection}, $T_{\small{\C}}$ is an $\F_4$-orthogonal projection with respect to the trace Euclidean inner product from $\F_4^n$ onto $\C$.
\end{proof}

\begin{theorem}\label{LAG}
Let $\C$ be an additive $(n, 2^k)$ code over $\F_4$ with generator matrix $G$.
Then $\C$ is ACD with respect to the trace Euclidean inner product if and only if $G\diamond G=G G^T +\bar{G}\bar{G}^{T}$ is invertible. Moreover, in this case the map $T_{\small{\C}}({\textbf v})={\rm Tr}(\textbf vG^T)(GG^T+\bar{G}\bar{G}^{T})^{-1}G$ is an $\F_4$-orthogonal projection with respect to the trace Euclidean inner product
from $\F_4^n$ onto $\C$, where for ${\bf v}\in \F_4^n$, ${\rm Tr}({\bf v}G^T)={\bf v}G^T+\bar{{\bf v}}\bar{G}^T$.
\end{theorem}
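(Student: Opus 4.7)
The proof splits naturally into the characterization and the verification of the projection formula.

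\emph{Characterization.} I would start by pairing two codewords $\mathbf{u}G$ and $\mathbf{w}G$ of $\C$ (with $\mathbf{u},\mathbf{w}\in \F_2^k$) under $\diamond$. Writing $\mathrm{Tr}(xy)=xy+\overline{xy}=xy+\bar x\bar y$ coordinate-wise and using $\bar{\mathbf{u}}=\mathbf{u}$, $\bar{\mathbf{w}}=\mathbf{w}$, this collapses to
\[(\mathbf{u}G)\diamond(\mathbf{w}G)=\mathbf{u}(GG^T+\bar G\bar G^T)\mathbf{w}^T=\mathbf{u}(G\diamond G)\mathbf{w}^T,\]
a bilinear form over $\F_2$; in particular every entry of $G\diamond G$ lies in $\F_2$. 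Thus $\mathbf{u}G\in \C^{\perp_{{\rm TrE}}}$ iff $\mathbf{u}(G\diamond G)=\mathbf{0}$. Since $G$ has $2$-rank $k$, the map $\mathbf{u}\mapsto \mathbf{u}G$ is injective, so $\C\cap \C^{\perp_{{\rm TrE}}}=\{\mathbf{0}\}$ is equivalent to the left kernel of $G\diamond G$ over $\F_2$ being trivial, i.e.\ to $G\diamond G$ being invertible.

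\emph{Projection formula.} Assuming $G\diamond G$ is invertible, I would check that the prescribed $T_\C$ satisfies the two defining properties of the orthogonal projection listed in Lemma~\ref{lemma-orthogonal projection}. For $\mathbf{v}=\mathbf{u}G\in \C$ with $\mathbf{u}\in \F_2^k$, the identities $\mathbf{v}G^T=\mathbf{u}GG^T$ and $\bar{\mathbf{v}}\bar G^T=\mathbf{u}\bar G\bar G^T$ combine to give $\mathrm{Tr}(\mathbf{v}G^T)=\mathbf{u}(G\diamond G)$, so $T_\C(\mathbf{v})=\mathbf{u}(G\diamond G)(G\diamond G)^{-1}G=\mathbf{u}G=\mathbf{v}$. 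For $\mathbf{v}\in \C^{\perp_{{\rm TrE}}}$ and arbitrary $\mathbf{w}\in \F_2^k$, pulling the $\F_2$-scalars $w_j$ through the trace yields
\[\mathbf{v}\diamond(\mathbf{w}G)=\mathrm{Tr}(\mathbf{v}G^T)\,\mathbf{w}^T,\]
and requiring this to vanish for every such $\mathbf{w}$ forces $\mathrm{Tr}(\mathbf{v}G^T)=\mathbf{0}$, whence $T_\C(\mathbf{v})=\mathbf{0}$. Lemma~\ref{lemma-orthogonal projection} then identifies $T_\C$ as the orthogonal projection of $\F_4^n$ onto $\C$.

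No step is genuinely hard; the real work is keeping the base fields straight. The key mechanism throughout is the $\F_2$-linearity of $\mathrm{Tr}$: this is what lets the $\F_2$-coefficients $\mathbf{u}$ and $\mathbf{w}$ pass freely through the trace, forces $G\diamond G$ to have entries in $\F_2$ (so that ``invertible'' is unambiguous and the $\F_2$-left-kernel argument aligns with the ACD condition), and makes $\mathrm{Tr}(\mathbf{v}G^T)$ the correct ``syndrome'' on which to invert $G\diamond G$.
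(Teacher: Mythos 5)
Your proposal is correct, and the projection part coincides with the paper's verification: the same computation $\mathrm{Tr}(\mathbf{u}GG^T)=\mathbf{u}(GG^T+\bar G\bar G^T)$ gives $T_\C(\mathbf{v})=\mathbf{v}$ on $\C$, the same vanishing of $\mathrm{Tr}(\mathbf{v}G^T)$ gives $T_\C(\mathbf{v})=\mathbf{0}$ on $\C^{\perp_{\rm TrE}}$, and both arguments then invoke Lemma~\ref{lemma-orthogonal projection}. Where you genuinely diverge is in how the equivalence itself is established. The paper proves only the contrapositive of one direction by hand (a nonzero left-kernel vector $\mathbf{u}$ of $G\diamond G$ yields $\mathbf{u}G\in\C\cap\C^{\perp_{\rm TrE}}$), and obtains the converse indirectly: invertibility gives the explicit projection, and Lemma~\ref{lemma-orthogonal projection iff ACD} (existence of a projection implies ACD) closes the loop. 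You instead observe that $(\mathbf{u}G)\diamond(\mathbf{w}G)=\mathbf{u}(G\diamond G)\mathbf{w}^T$ is an $\F_2$-bilinear form with $G\diamond G$ having entries in $\F_2$, so that $\C\cap\C^{\perp_{\rm TrE}}=\{\mathbf{0}\}$ is exactly triviality of the left kernel of $G\diamond G$, which for a square matrix is invertibility; this settles both directions at once, makes the projection formula a separate add-on rather than the vehicle for the ``if'' direction, and removes the dependence on Lemma~\ref{lemma-orthogonal projection iff ACD}. You also make explicit two points the paper leaves implicit: that the entries of $G\diamond G$ lie in $\F_2$ (so ``invertible'' is unambiguous over $\F_2$ or $\F_4$), and why $\mathrm{Tr}(\mathbf{v}G^T)=\mathbf{0}$ for $\mathbf{v}\in\C^{\perp_{\rm TrE}}$ (pairing against the rows of $G$). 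The trade-off is that the paper's route showcases the orthogonal-projection machinery of Lemmas~\ref{lemma-orthogonal projection} and~\ref{lemma-orthogonal projection iff ACD} it has just built (following Boonniyom--Jitman), while yours is more self-contained and symmetric; note only that, like the paper, you apply Lemma~\ref{lemma-orthogonal projection} to the map $T_\C$, which is $\F_2$-linear rather than $\F_4$-linear because of the conjugation in $\mathrm{Tr}(\mathbf{v}G^T)$, so the same (shared) caveat about Definition~\ref{LP} applies to both proofs.
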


\begin{proof}
Assume ${\rm Tr}(GG^T)=GG^T+\bar{G}\bar{G}^{T}$ is not invertible. Since ${\rm Tr}(GG^{T})$ is a $k\times k$ matrix, we have ${\rm rank}({\rm Tr}(GG^{T})< k$. Hence
$$k={\rm null}({\rm Tr}(GG^{T}))+{\rm rank}({\rm Tr}(GG^{T}))<{\rm null}({\rm Tr}(GG^{T}))+k,$$
 then ${\rm null}({\rm Tr}(GG^{T}))>k-k=0$.
So there exists $\textbf u\in {{\rm Ker}({\rm Tr}(GG^{T}))\setminus \{\textbf 0\}} \subseteq \F_2^k$ such that $\textbf u{\rm Tr}(GG^{T})={\textbf 0}$ and ${\textbf u}G\in {\C\setminus \{\textbf 0\}}$. We have,
$${\bf 0}\neq {\bf u}\in {\rm Ker}({\rm Tr}(GG^T))={\bf u}GG^T+{\bf u}\bar{G}\bar{G}^T=({\bf u}G)G^T+(\bar{{\bf u}}\bar{G})\bar{G}^T$$
Hence, $\textbf uG$ is also a vector in $\C^{\perp_{{\rm TrE}}}$; i.e., $\C\cap \C^{\perp_{{\rm TrE}}}\neq \{\textbf 0\}$. Therefore, $\C$ is not ACD with respect to the trace Euclidean inner product.

 Conversely, assume that $G G^T +\bar{G}\bar{G}^{T}$ is invertible. Let $T_{\small{\C}}:\F_4^n\rightarrow \C $ be defined by
 $$T_{\small{\C}}({\textbf v})={\rm Tr}({\textbf v}G^T)(GG^T+\bar{G}\bar{G}^{T})^{-1}G.$$
  Let ${\textbf v}\in \F_4^n$. If $\textbf v\in \C$, then there exists $\textbf u\in \F_2^k$ such that $\textbf v=\textbf uG$; hence,
\begin{eqnarray*}
T_{\small{\C}}({\textbf v})&=&{\rm Tr}(\textbf vG^T)(GG^T+\bar{G}\bar{G}^{T})^{-1}G\\
 &=&{\rm Tr}(\textbf uGG^T)(GG^T+\bar{G}\bar{G}^{T})^{-1}G\\
 &=&(\textbf uGG^{T}+\bar{\textbf u}\bar{G}\bar{G}^{T})(GG^T+\bar{G}\bar{G}^{T})^{-1}G\\
 &=&{\textbf u}(GG^{T}+\bar{G}\bar{G}^{T})(GG^T+\bar{G}\bar{G}^{T})^{-1}G\\
 &=&{\textbf u}I_kG\\
 &=&{\textbf u}G\\
 &=&{\textbf v},
\end{eqnarray*}
Assume that $\textbf v\in \C^{\perp_{{\rm TrE}}}$. Then ${\rm Tr}(\textbf vG^{T})={\textbf 0}$, and $$T_{\small{\C}}(\textbf v)={\rm Tr}(\textbf vG^T)(GG+\bar{G}\bar{G}^{T})^{-1}G={\textbf 0}(GG^T+\bar{G}\bar{G}^{T})^{-1}G={\textbf 0}.$$
Hence by Lemma~\ref{lemma-orthogonal projection}, $T_{\small{\C}}$ is an $\F_4$-orthogonal projection with respect to the trace Euclidean inner product from $\F_4^n$ onto $\C$. By Lemma~\ref{lemma-orthogonal projection iff ACD}, $\C$ is an ACD code with respect to the trace Euclidean inner product.
\end{proof}

\begin{cor}
If $\C$ and $\D$ are two binary LCD $[n, k_1]$ and $[n, k_2]$ codes respectively, then $\C_4=\omega \C + \omega^{2}\D$ is an ACD $(n, 2^{k_1 + k_2})$ code over $\F_4$ with respect to the trace Euclidean inner product.
\end{cor}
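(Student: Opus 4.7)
The plan is to invoke Theorem~\ref{LAG} directly. Let $G_1$ and $G_2$ be binary generator matrices for $\C$ and $\D$, of sizes $k_1\times n$ and $k_2\times n$. Viewed as an additive code over $\F_4$, the code $\C_4=\omega\C+\omega^2\D$ has generator matrix
\[
G=\begin{pmatrix}\omega G_1\\ \omega^2 G_2\end{pmatrix},
\]
where the rows are indexed by the $\F_2$-basis of $\C\oplus\D$. A preliminary step is to verify that these $k_1+k_2$ rows are $\F_2$-linearly independent: any $\F_2$-relation $\omega(\mathbf{a}G_1)+\omega^2(\mathbf{b}G_2)=\mathbf{0}$ forces $\mathbf{a}G_1=\mathbf{b}G_2=\mathbf{0}$ because $\{\omega,\omega^2\}$ is an $\F_2$-basis of $\F_4$, and then full row rank of $G_1,G_2$ yields $\mathbf{a}=\mathbf{0}$, $\mathbf{b}=\mathbf{0}$. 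Hence $|\C_4|=2^{k_1+k_2}$.

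Next I would compute $G\diamond G=GG^T+\bar{G}\bar{G}^T$ in block form. Using $\bar{G}=\bigl(\begin{smallmatrix}\omega^2 G_1\\ \omega G_2\end{smallmatrix}\bigr)$ and the identities $\omega^3=1$, $\omega^4=\omega$, a direct block multiplication gives
\[
GG^T=\begin{pmatrix}\omega^2 G_1G_1^T & G_1G_2^T\\ G_2G_1^T & \omega G_2G_2^T\end{pmatrix},\qquad
\bar{G}\bar{G}^T=\begin{pmatrix}\omega G_1G_1^T & G_1G_2^T\\ G_2G_1^T & \omega^2 G_2G_2^T\end{pmatrix}.
\]
The key observation is that in $\F_4$ we have $\omega+\omega^2=1$ and $1+1=0$, so the off-diagonal blocks cancel and the diagonal blocks collapse to
\[
G\diamond G=\begin{pmatrix}G_1G_1^T & 0\\ 0 & G_2G_2^T\end{pmatrix}.
\]

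Finally, since $\C$ and $\D$ are binary LCD codes, Theorem~\ref{LG} applied over $\F_2$ gives that $G_1G_1^T$ and $G_2G_2^T$ are invertible over $\F_2$, hence also over $\F_4$. Therefore the block-diagonal matrix $G\diamond G$ is invertible over $\F_4$, and Theorem~\ref{LAG} concludes that $\C_4$ is an ACD $(n,2^{k_1+k_2})$ code over $\F_4$ with respect to the trace Euclidean inner product. The only subtle point in the argument is the bookkeeping with the scalars $\omega,\omega^2$ under conjugation; once the characteristic-two cancellations are used, the result is immediate.
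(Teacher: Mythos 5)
Your proof is correct, but it follows a genuinely different route from the paper's. The paper argues at the level of the codes themselves: using that $\{\omega,\omega^2\}$ is a trace-orthogonal basis of $\F_4$ over $\F_2$, it identifies $\C_4^{\perp_{\rm TrE}}=\omega\,\C^{\perp}+\omega^2\D^{\perp}$, and then derives a contradiction from the uniqueness of the decomposition $\textbf v=\omega\textbf c+\omega^2\textbf d$ of any vector in $\C_4\cap\C_4^{\perp_{\rm TrE}}$, landing in $\C\cap\C^\perp$ and $\D\cap\D^\perp$. You instead invoke the generator-matrix criterion (Theorem~\ref{LAG}): with $G=\bigl(\begin{smallmatrix}\omega G_1\\ \omega^2 G_2\end{smallmatrix}\bigr)$ you check the $\F_2$-independence of the rows (giving the size $2^{k_1+k_2}$, which the paper only asserts) and compute $G\diamond G=GG^T+\bar G\bar G^T=\mathrm{diag}(G_1G_1^T,\,G_2G_2^T)$, whose blocks are invertible by Massey's criterion (Theorem~\ref{LG}) since $\C$ and $\D$ are binary LCD; your scalar bookkeeping ($\omega+\omega^2=1$, $1+1=0$, conjugation swapping $\omega$ and $\omega^2$) is accurate, and invertibility over $\F_2$ does pass to $\F_4$. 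Your argument is in fact the same method the paper uses for the subsequent corollary on $G=\omega A+\omega^2 B$ built from a binary LCD $[2n,k]$ code, so it fits naturally into the paper's framework; what it buys is an explicit, checkable matrix certificate and the rank statement, while the paper's route buys an explicit description of the dual code $\C_4^{\perp_{\rm TrE}}$ without any matrix computation. Both are complete proofs.
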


\begin{proof}
Since $\C$ and $\D$ are LCD codes, we have $\C\cap \C^{\perp}=\{\textbf 0\}, \D \cap \D^{\perp}=\{\textbf 0\}$. Furthermore, since $\{ \omega,\omega^{2}\}$ is a trace orthogonal basis in $\F_4$, we have $\C_{4}^{\perp_{{\rm TrE}}}=\omega \C^{\perp}+\omega^{2} \D^{\perp}$.
We need to prove that $\C_{4} \cap \C_{4}^{\perp_{{\rm TrE}}}=\{\textbf 0\}$. If there exists $\textbf 0\neq \textbf v \in \C_{4}\cap \C_{4}^{\perp_{{\rm TrE}}}$, then there are $\textbf c\in \C, \textbf d\in \D$, and $\textbf c, \textbf d$ are nonzero such that $\textbf v=\omega \textbf c+\omega^{2} \textbf d$.
Similarly, there are $\textbf c'\in \C^{\perp}$, $\textbf d'\in \D^{\perp}$, and $\textbf c', \textbf d'$ are nonzero such that $\textbf v=\omega \textbf c'+\omega^{2} \textbf d'$.
Hence we have $\omega \textbf c+\omega^{2} \textbf d=\omega \textbf c'+\omega^{2} \textbf d'$, which implies that $\textbf c=\textbf c',\textbf d=\textbf d'$. Then we have $\textbf c\in \C \cap \C^{\perp}$, $\textbf d\in \D \cap \D^{\perp}$, where $\textbf c, \textbf d$ are nonzero, which is a contradiction. Therefore, $\C_{4}$ is an ACD code. Clearly, the $\F_2$-rank of $\C_4$ is $k_1 + k_2$.
\end{proof}

\begin{cor}\label{cor-acd codes from LCD}
Let $\C_1$ be a binary $[2n,k]$  code with generator matrix $G_1=[A|B]$, where $A$ and $B$ are $k\times n$ matrices. Let $\C_2$ be an additive code with generator matrix $G=\omega A+\omega^2 B$. Then $\C_2$ is an ACD $(n, 2^{k})$ code over $\F_4$ with respect to the trace Euclidean inner product if and only if $\C_1$ is a binary $[2n,k]$ LCD code.
\end{cor}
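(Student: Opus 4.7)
The plan is to reduce the statement to a direct matrix computation using the two earlier characterizations: Theorem \ref{LG} for binary LCD codes and Theorem \ref{LAG} for trace Euclidean ACD codes over $\F_4$. By Theorem \ref{LG}, $\C_1$ is a binary LCD code if and only if $G_1 G_1^T$ is invertible over $\F_2$, and since $G_1 = [A \mid B]$, we have
$$G_1 G_1^T = A A^T + B B^T.$$
By Theorem \ref{LAG}, $\C_2$ is ACD with respect to the trace Euclidean inner product if and only if $G G^T + \bar{G}\bar{G}^T$ is invertible. The entire proof reduces to showing that these two matrices coincide.

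The key step is to compute $G G^T + \bar{G} \bar{G}^T$ from $G = \omega A + \omega^2 B$ (noting $A, B$ are binary, so $\bar{A}=A$, $\bar{B}=B$, hence $\bar{G} = \omega^2 A + \omega B$). First I would expand
$$G G^T = \omega^2 A A^T + \omega^3 A B^T + \omega^3 B A^T + \omega^4 B B^T = \omega^2 A A^T + A B^T + B A^T + \omega B B^T,$$
using $\omega^3 = 1$ and $\omega^4 = \omega$. Symmetrically,
$$\bar{G}\bar{G}^T = \omega A A^T + A B^T + B A^T + \omega^2 B B^T.$$
Adding in characteristic 2, the cross terms $AB^T$ and $BA^T$ cancel, and the diagonal blocks are multiplied by $\omega + \omega^2 = 1$, yielding
$$G G^T + \bar{G}\bar{G}^T = A A^T + B B^T.$$

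Thus $G G^T + \bar{G}\bar{G}^T = G_1 G_1^T$ as a matrix with entries in $\F_2 \subseteq \F_4$, and invertibility over $\F_2$ coincides with invertibility over $\F_4$. Combining the two characterizations, $\C_2$ is trace Euclidean ACD if and only if $\C_1$ is a binary LCD code. I expect the only mild obstacle to be bookkeeping in the cross terms of $GG^T$ (keeping $\omega^3=1$ and the characteristic-2 cancellations straight); once that identity is verified, the equivalence is immediate from the two cited theorems.
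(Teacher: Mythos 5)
Your proposal is correct and follows essentially the same route as the paper: both directions are reduced to Theorem~\ref{LG} and Theorem~\ref{LAG}, and the heart of the argument is the same identity $G G^T + \bar{G}\bar{G}^T = AA^T + BB^T = G_1G_1^T$ (your expansion with the cancelling cross terms is just a more explicit version of the paper's one-line computation of $G\diamond G$). The only detail worth adding, which the paper also only asserts, is that the $\F_2$-rank of $G=\omega A+\omega^2 B$ equals that of $[A\mid B]$, so $\C_2$ is indeed an $(n,2^k)$ code.
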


\begin{proof}
Suppose that $\C_1$ is a binary LCD code. Then, by Theorem~\ref{LG}, $[A|B][A|B]^{T}=AA^{T}+BB^{T}$ is invertible over $\F_2$. We also have
\begin{eqnarray*}
G\diamond G&=&(\omega A+\omega^{2} B)(\omega A^{T}+\omega^{2}B^{T})+(\overline{\omega} A+\overline{\omega^{2}} B)(\overline{\omega} A^{T}+\overline{\omega^{2}}B^{T})\\
&=&(\omega^{2}+\omega)AA^{T}+(\omega +\omega^{2})BB^{T}\\
&=&AA^{T}+BB^{T};
\end{eqnarray*}
hence $G\diamond G=[A|B][A|B]^{T}$ is invertible over $\F_4$. Therefore, by Theorem~\ref{LAG}, $\C_2$ is ACD with respect to the trace Euclidean inner product. Clearly, the $\F_2$-rank of $\C_2$ is $k$.

Conversely, if $\C_2$ is ACD with respect to the trace Euclidean inner product, then we reverse the above proof to show that $\C_1$ is a binary LCD code.
\end{proof}

\begin{example}
Let $\C_1$ be a binary $[12,6,4]$ code with the generator matrix $G_1=[A|B]$ of the form

$$G_1=\left[
\begin{array}{c|c}
\begin{matrix}
1 & 0 & 0 & 0 & 0 & 0  \\
0 & 1 & 0 & 0 & 0 & 0 \\
0 & 0 & 1 & 0 & 0 & 0  \\
0 & 0 & 0 & 1 & 0 & 0  \\
0 & 0 & 0 & 0 & 1 & 0  \\
0 & 0 & 0 & 0 & 0 & 1
\end{matrix}&
\begin{matrix}
0 & 1 & 1 & 1 & 0 & 0 \\
1 & 1 & 0 & 1 & 1 & 0 \\
1 & 0 & 1 & 0 & 1 & 1 \\
1 & 1 & 0 & 1 & 0 & 1 \\
0 & 1 & 1 & 0 & 1 & 1 \\
0 & 0 & 1 & 1 & 1 & 0
\end{matrix}
\end{array}
\right].
$$
It is easy to check that $\C_1$ is a binary LCD code. By Corollary~\ref{cor-acd codes from LCD} $\C_2$ is a $(6,2^6,4)$ ACD code with generator matrix $G$
$$
G=\begin{bmatrix}
\omega & \omega^2 & \omega^2 & \omega^2 & 0 & 0 \\
\omega^2 & 1 & 0 & \omega^2 & \omega^2 & 0 \\
\omega^2 & 0 & 1 & 0 & \omega^2 & \omega^2 \\
\omega^2 & \omega^2 & 0 & 1 & 0 & \omega^2 \\
0 & \omega^2 & \omega^2 & 0 & 1 & \omega^2 \\
0 & 0 & \omega^2 & \omega^2 & \omega^2 & \omega
\end{bmatrix}.
$$
\end{example}

\begin{cor}
Let $\C$ be an $(n, 2^k)$ additive conjucyclic code with generator matrix $G$ over $\F_4$, and form the binary code

$$\C' = \{ \Tr(\om \textbf u) | \Tr(\ob \textbf u) : \textbf u \in \C \},$$

\noindent
where the trace is applied componentwise and the vertical bar denotes concatenation. Then $\C'$ is a
binary cyclic code of length $2n$ with generator matrix $G'=[\omega G+\bar{\omega}\bar{G}|\bar{\omega}G+\omega \bar{G}]$, which is LCD if and only if $\C$ is ACD with respect to the trace Euclidean inner product.
\end{cor}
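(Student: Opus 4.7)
The plan is to verify three assertions in turn: (i) that $G'=[\omega G+\bar\omega\bar G\mid\bar\omega G+\omega\bar G]$ really generates $\C'$ as a binary linear code; (ii) that $\C'$ is cyclic; (iii) that $\C'$ is binary LCD if and only if $\C$ is ACD with respect to the trace Euclidean inner product. Throughout I would rely on the identity $\Tr(x)=x+\bar x$ on $\F_4$ and on the fact that $\{\omega,\bar\omega\}$ is an $\F_2$-basis of $\F_4$; the deeper criteria are already in hand via Theorem~\ref{LG} and Theorem~\ref{LAG}.

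For (i), write a typical codeword of $\C$ as $u=vG$ with $v\in\F_2^k$. Since $\bar v=v$, applying $\Tr$ componentwise yields $\Tr(\omega u)=\omega vG+\bar\omega v\bar G=v(\omega G+\bar\omega\bar G)$ and similarly $\Tr(\bar\omega u)=v(\bar\omega G+\omega\bar G)$, so concatenation gives $vG'$. The $\F_2$-linear map $x\mapsto(\Tr(\omega x),\Tr(\bar\omega x))$ returns the coordinates of $x$ in the basis $\{\omega,\bar\omega\}$ of $\F_4$ (direct check on the four elements of $\F_4$), hence extends to an $\F_2$-isomorphism $\F_4^n\to\F_2^{2n}$; its restriction to $\C$ is $\C'$ and forces $G'$ to have $\F_2$-rank~$k$.

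For (ii), I would invoke the standard definition of an additive conjucyclic code as one closed under $\sigma(u_1,\ldots,u_n)=(\bar u_n,u_1,\ldots,u_{n-1})$, and verify that the above isomorphism intertwines $\sigma$ with the ordinary right cyclic shift on $\F_2^{2n}$. The key identity is $\Tr(\omega\bar u_n)=\Tr(\overline{\bar\omega u_n})=\Tr(\bar\omega u_n)$, which places the last entry of the image of $u$ at position $1$ of the image of $\sigma(u)$; the remaining $2n-1$ coordinates match after a direct but slightly tedious index-shift check. This bookkeeping is the main (and essentially only) technical obstacle.

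For (iii), I would compute the Gram matrix $G'(G')^T$ over $\F_2$ (equivalently over $\F_4$). Expanding
\[G'(G')^T=(\omega G+\bar\omega\bar G)(\omega G^T+\bar\omega\bar G^T)+(\bar\omega G+\omega\bar G)(\bar\omega G^T+\omega\bar G^T)\]
and using $\omega^2=\bar\omega$, $\bar\omega^2=\omega$, $\omega\bar\omega=1$, $\omega+\bar\omega=1$, together with characteristic~$2$, the cross terms $G\bar G^T$ and $\bar GG^T$ each acquire coefficient $2=0$, while the diagonal terms $GG^T$ and $\bar G\bar G^T$ each acquire coefficient $\omega+\bar\omega=1$. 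Hence $G'(G')^T=GG^T+\bar G\bar G^T=G\diamond G$. Theorem~\ref{LG} then identifies binary LCD-ness of $\C'$ with invertibility of $G'(G')^T$, and Theorem~\ref{LAG} identifies trace-Euclidean ACD-ness of $\C$ with invertibility of $G\diamond G$; these are the same condition, completing the equivalence.
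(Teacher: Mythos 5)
Your proposal is correct and its core step --- the computation $G'(G')^T=GG^T+\bar G\bar G^T=G\diamond G$ followed by the criteria of Theorem~\ref{LG} and Theorem~\ref{LAG} --- is exactly the paper's own proof of the LCD/ACD equivalence. The additional verifications you sketch (that $G'$ generates $\C'$ via $\Tr(\omega vG)=v(\omega G+\bar\omega\bar G)$, and that the trace map intertwines the conjucyclic shift with the length-$2n$ cyclic shift using $\Tr(\omega\bar u_n)=\Tr(\bar\omega u_n)$) are sound and in fact supply details that the paper merely asserts without proof.
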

\begin{proof}
Suppose that $\C$ is an ACD code with respect to the trace Euclidean inner product. Then $G\diamond G=GG^T+\bar{G}\bar{G^T}$ is invertible. And $\C'$ is a
binary cyclic code with generator matrix $G'$. Hence:
\begin{eqnarray*}
G'G'^T&=&(\omega G+\bar{\omega}\bar{G})(\omega G^T+\bar{\omega}\bar{G}^T)+(\bar{\omega} G+\omega\bar{G})(\bar{\omega} G^T+\omega\bar{G}^T)\\
&=&\omega ^2GG^T+G\bar{G}^T+\bar{G}G^T+\omega \bar{G}\bar{G}^T+\omega GG^T+G\bar{G}^T+\bar{G}G^T+\omega ^2\bar{G}\bar{G}^T\\
&=&GG^T+\bar{G}\bar{G}^T.
\end{eqnarray*}
Therefore, $\C'$ is a binary LCD code.

Conversely, if  $\C'$ is a binary LCD code, then we reverse the above proof to show that $\C$ is an ACD code with respect to the trace Euclidean inner product.
\end{proof}

\begin{remark}
 As a natural question, one can ask whether there is
 an ACD $(n, 2^{k^*}, d^*)$ code over $\mathbb F_4$ under the trace Euclidean inner product which satisfies $k^* > 2k$ and $d^* = d$, given an optimal linear LCD $[n, k, d]$ code over $\mathbb F_4$ under the Euclidean inner product. In what follows, we give several examples with the above conditions. This implies that ACD codes over $\mathbb F_4$ are sometimes better than LCD codes over $\mathbb F_4$.
 \end{remark}
\begin{example}
By Grassl's table~\cite{Gra}, there is an Euclidean optimal $[6,2,4]$ code over $\F_4$. One can also find an optimal  LCD $[6 ,2, 4]$ code $\mathcal K_1$ over $\F_4$ with generator matrix $K_1$ under the Euclidean inner product.
$$
K_1=\begin{bmatrix}
1 & 0 & 1 & 1 & 0 & \omega \\
0 & 1 & \omega & \omega & \omega^2 & \omega
\end{bmatrix},
$$
By a random search, we have constructed an ACD $(6,2^5,4)$ code $\mathcal K_{2,1}$ over $\F_4$ with generator matrix $K_{2,1}$ under the trace Euclidean inner product as follows. Note that $\mathcal K_{2,1}$ has double codewords than $\mathcal K_1$ although both have the same length and minimum weight.
$$
K_{2,1}=\begin{bmatrix}
1 & 0 & \omega & 1 & \omega^2 & \omega \\
\omega & 0 & \omega & \omega^2 & 1 & \omega^2 \\
0 & 1 & 0 & 1 & \omega & \omega \\
0 & \omega & \omega & \omega & \omega & \omega \\
0 & 0 & 1 & \omega & 1 & \omega
\end{bmatrix}.$$ 
The weight distribution of $\mathcal K_{2,1}$ is $A_0=1, A_4=17, A_5=8, A_6=6$ and the order of the permutation automorphism group of $\mathcal K_{2,1}$ is 4. \\
\indent We have also found two more inequivalent ACD $(6,2^5,4)$ codes denoted by $\mathcal K_{2,2}$ and $\mathcal K_{2,3}$ with generator matrices $K_{2,2}$ and $K_{2,3}$, respectively.
The weight distributions and the orders of the permutation automorphism groups of these codes are displayed in Table 1.
$$
K_{2,2}=\begin{bmatrix}
  1  & 0  &  \omega  & \omega^2 & \omega^2 &   0\\
 \omega  &   0 &  0 & \omega &   1 &\omega^2 \\
  0  & 1 &  0  & 1 &  1  & \omega \\
  0  &\omega  & 0 & \omega  & \omega  &  1\\
  0  & 0 & \omega^2 & \omega^2  &  1  & 1
\end{bmatrix}, ~~~
K_{2,3}=\begin{bmatrix}
  1  & 0  & 0  & 1 &  \omega & \omega^2 \\
  \omega &  0  &  \omega & \omega^2 & \omega^2 & \omega^2 \\
  0  & 1 &  0  & 1 & \omega^2  & 1 \\
  0  & \omega &  \omega &  1  &  \omega &  0 \\
  0  & 0 & \omega^2 &  \omega & \omega^2 &  \omega\\
\end{bmatrix}.
$$
\end{example}

\begin{example}
It is well known~\cite{Car} that linear $[n,k, d]$ codes over $\F_4$ are equivalent to LCD $[n,k, d]$ codes under the Euclidean inner product. For $n=35$ and $96$ with $k=3$, there are optimal linear $[35,3,26]$, $[96,3,72]$ codes over $\F_4$~\cite{Gra}, which implies that there are optimal LCD $[35,3,26]$, $[96,3,72]$ codes over $\F_4$. Guo el al.~\cite{GUO} constructed better additive $(35,2^7,26)$ and $(96,2^7,72)$ codes over $\F_4$ with generator matrices $G_{7,35}$ and $G_{7,96}$, respectively. However, these are not ACD codes under the trace Euclidean inner product. By multiplying some columns of $G_{7,35}$ and $G_{7,96}$ by nonzero elements of $\F_4$, we have constructed an ACD $(35,2^7,26)$ code $\mathcal K_3$ and an ACD $(96,2^7,72)$ code $\mathcal K_4$ over $\F_4$ with generator matrices $K_3$ and $K_4$, respectively under the trace Euclidean inner product.
The weight distributions and the orders of the permutation automorphism groups of these codes are displayed in Table 1.
{\small\[{\setlength{\arraycolsep}{0.1pt}
\renewcommand{\arraystretch}{.5}
K_3=\left[\begin{array}{ccccccccccccccccccccccccccccccccccc}
1&0&\omega&\omega^2&0&\omega&\omega^2&0&1&\omega^2&\omega^2&1&0&\omega&\omega^2&\omega&0&0&\omega&\omega^2&1&\omega&\omega^2&1&1&\omega^2&\omega&0&\omega^2&\omega&0
&0&\omega&\omega^2&1 \\
\omega&0&\omega&1&\omega^2&1&1&0&\omega&0&1&\omega^2&1&\omega^2&1&\omega^2&1&0&\omega&0&\omega&\omega^2&1&\omega^2&\omega&0&\omega&0&0&\omega&0&1&\omega^2
&1&\omega^2 \\
0&1&\omega&0&\omega&\omega^2&1&0&1&\omega&0&\omega^2&\omega&1&0&1&\omega&0&\omega^2&\omega&1&0&1&\omega&0&\omega^2&\omega&1&\omega^2&\omega&1&\omega^2&0&1
&\omega \\
0&\omega&0&\omega&\omega&\omega&\omega^2&0&\omega&0&1&\omega^2&1&\omega^2&0&\omega&0&1&\omega^2&1&\omega^2&0&\omega&0&1&\omega^2&1&\omega^2&\omega&0&\omega&\omega^2
&1&\omega^2&1 \\
0&0&1&1&1&\omega^2&\omega^2&0&0&1&\omega^2&\omega&\omega&\omega^2&0&0&1&\omega^2&\omega&\omega&\omega^2&0&0&1&\omega^2&\omega&\omega&\omega^2&1&1&0&\omega
&\omega^2&\omega^2&\omega \\
0&0&0&0&0&0&0&1&1&1&1&1&1&1&\omega&\omega&\omega&\omega&\omega&\omega&\omega&\omega^2&\omega^2&\omega^2&\omega^2&\omega^2&\omega^2&\omega^2&1&1&1&1&1
&1&1 \\
0&0&0&0&0&0&0&\omega&\omega&\omega&\omega&\omega&\omega&\omega&\omega^2&\omega^2&\omega^2&\omega^2&\omega^2&\omega^2&\omega^2&1&1&1&1&1&1&1&\omega&\omega&\omega&\omega&\omega
&\omega&\omega
\end{array}\right].}\]}
$$K_4=\begin{bmatrix}
A_1 & A_2 & A_3
\end{bmatrix},$$
where
{\small\[{\setlength{\arraycolsep}{0.34pt}
\renewcommand{\arraystretch}{.5}
A_1=\left[\begin{array}{cccccccccccccccccccccccccccccccc}
1&0&0&1&\omega^2&1&1&\omega&\omega&0&0&\omega^2&\omega^2&1&1&\omega&\omega&0&0&\omega^2&\omega^2&1&1&\omega&0&\omega^2&\omega^2&\omega&\omega&1&1&0\\
\omega&0&1&1&1&\omega&0&1&\omega^2&0&\omega&\omega^2&1&\omega&0&1&\omega^2&0&\omega&\omega^2&1&\omega&0&1&0&1&\omega&0&\omega^2&\omega&1&\omega^2\\
0&1&0&1&\omega&\omega^2&\omega&1&0&1&0&\omega^2&\omega&\omega^2&\omega&1&0&1&0&\omega^2&\omega&\omega^2&\omega&1&0&\omega&0&1&\omega^2&1&\omega^2&\omega\\
0&\omega&1&\omega&1&\omega^2&\omega^2&0&0&\omega&\omega&1&1&\omega^2&\omega^2&0&0&\omega&\omega&1&1&\omega^2&\omega^2&0&0&\omega^2&\omega^2&\omega&\omega&1&1&0\\
0&0&\omega&\omega^2&1&1&\omega&\omega^2&0&0&\omega^2&\omega&1&1&\omega&\omega^2&0&0&\omega^2&\omega&1&1&\omega&\omega^2&0&0&1&\omega^2&\omega&\omega&\omega^2&1\\
0&0&0&0&0&0&0&0&0&0&0&0&0&0&0&0&0&0&0&0&0&0&0&0&1&\omega&\omega&\omega&\omega&\omega&\omega&\omega\\
0&0&0&0&0&0&0&0&0&0&0&0&0&0&0&0&0&0&0&0&0&0&0&0&\omega&\omega^2&\omega^2&\omega^2&\omega^2&\omega^2&\omega^2&\omega^2
\end{array}\right],}\]}
{\small\[{\setlength{\arraycolsep}{0.15pt}
\renewcommand{\arraystretch}{.5}
A_2=\left[\begin{array}{cccccccccccccccccccccccccccccccc}
0&\omega^2&\omega^2&\omega&\omega&1&1&0&\omega&1&1&0&0&\omega^2&\omega^2&\omega&\omega&1&1&0&0&\omega^2&\omega^2&\omega&1&\omega&\omega&\omega^2&\omega^2&0&0&1\\
0&1&\omega&0&\omega^2&\omega&1&\omega^2&\omega^2&\omega&1&\omega^2&0&1&\omega&0&\omega^2&\omega&1&\omega^2&0&1&\omega&0&\omega&\omega^2&0&\omega&1&0&\omega^2&1\\
0&\omega&0&1&\omega^2&1&\omega^2&\omega&0&\omega&0&1&\omega^2&1&\omega^2&\omega&0&\omega&0&1&\omega^2&1&\omega^2&\omega&0&\omega&0&1&\omega^2&1&\omega^2&\omega\\
0&\omega^2&\omega^2&\omega&\omega&1&1&0&0&\omega^2&\omega^2&\omega&\omega&1&1&0&0&\omega^2&\omega^2&\omega&\omega&1&1&0&0&\omega^2&\omega^2&\omega&\omega&1&1&0\\
0&0&1&\omega^2&\omega&\omega&\omega^2&1&0&0&1&\omega^2&\omega&\omega&\omega^2&1&0&0&1&\omega^2&\omega&\omega&\omega^2&1&0&0&1&\omega^2&\omega&\omega&\omega^2&1\\
\omega&\omega&\omega&\omega&\omega&\omega&\omega&\omega&\omega^2&\omega^2&\omega^2&\omega^2&\omega^2&\omega^2&\omega^2&\omega^2&\omega^2&\omega^2&\omega^2&\omega^2
&\omega^2&\omega^2&\omega^2&\omega^2&1&1&1&1&1&1&1&1\\
\omega^2&\omega^2&\omega^2&\omega^2&\omega^2&\omega^2&\omega^2&\omega^2&1&1&1&1&1&1&1&1&1&1&1&1&1&1&1&1&\omega&\omega&\omega&\omega&\omega&\omega&\omega&\omega
\end{array}\right],}\]}
{\small\[{\setlength{\arraycolsep}{0.23pt}
\renewcommand{\arraystretch}{.5}
A_3=\left[\begin{array}{cccccccccccccccccccccccccccccccc}
1&\omega&\omega&\omega^2&\omega^2&0&0&1&1&\omega&\omega&\omega^2&\omega^2&0&0&1&\omega^2&0&0&1&1&\omega&\omega&\omega^2&\omega^2&0&0&1&1&\omega&\omega&\omega^2\\
\omega&\omega^2&0&\omega&1&0&\omega^2&1&\omega&\omega^2&0&\omega&1&0&\omega^2&1&\omega&\omega^2&0&\omega&1&0&\omega^2&1&\omega&\omega^2&0&\omega&1&0&\omega^2&1\\
0&\omega&0&1&\omega^2&1&\omega^2&\omega&0&\omega&0&1&\omega^2&1&\omega^2&\omega&\omega&0&\omega&\omega^2&1&\omega^2&1&0&\omega&0&\omega&\omega^2&1&\omega^2&1&0\\
0&\omega^2&\omega^2&\omega&\omega&1&1&0&0&\omega^2&\omega^2&\omega&\omega&1&1&0&0&\omega^2&\omega^2&\omega&\omega&1&1&0&0&\omega^2&\omega^2&\omega&\omega&1&1&0\\
0&0&1&\omega^2&\omega&\omega&\omega^2&1&0&0&1&\omega^2&\omega&\omega&\omega^2&1&1&1&0&\omega&\omega^2&\omega^2&\omega&0&1&1&0&\omega&\omega^2&\omega^2&\omega&0\\
1&1&1&1&1&1&1&1&1&1&1&1&1&1&1&1&\omega&\omega&\omega&\omega&\omega&\omega&\omega&\omega&\omega&\omega&\omega&\omega&\omega&\omega&\omega&\omega\\
\omega&\omega&\omega&\omega&\omega&\omega&\omega&\omega&\omega&\omega&\omega&\omega&\omega&\omega&\omega&\omega&\omega^2&\omega^2&\omega^2&\omega^2&\omega^2&
\omega^2&\omega^2&\omega^2&\omega^2&\omega^2&\omega^2&\omega^2&\omega^2&\omega^2&\omega^2&\omega^2
\end{array}\right].}\]}
\end{example}

\begin{table}[h]
\label{table:ACD codes}
\footnotesize{
\caption{Weight distribution and permutation automorphism group order $|{\mbox{PAut}}(\mathcal C)|$ for our ACD codes $\mathcal C$ over $\mathbb F_4$}}
\centering
\begin{tabular}{c|c|c|c}
\hline
code $\mathcal C$    & parameters & weight distribution  & $|{\mbox{PAut}}(\mathcal C)|$ \\
\hline
$\mathcal K_{2,1}$ & $(6, 2^5, 4)$ & $A_0=1, A_4=17, A_5=8, A_6=6$ & 4\\
$\mathcal K_{2,2}$ & $(6, 2^5, 4)$ & $A_0=1, A_4=17, A_5=8, A_6=6$ & 1 \\
$\mathcal K_{2,3}$ & $(6, 2^5, 4)$ & $A_0=1, A_4=15, A_5=12, A_6=4$ & 1 \\
$\mathcal K_{3}$   & $(35, 2^7, 26)$ & $A_0=1, A_{26}=105, A_{28}=15, A_{30}=7$ & 84 \\
$\mathcal K_{4}$   & $(96, 2^7, 72)$ & $A_0 =1, A_{72}=118, A_{80}=9$ & $2^{44} \cdot 3^{17} \cdot 7$ \\
\hline
\end{tabular}
\end{table}

\section{Conclusion}

In this paper, we have studied ACD codes over $\mathbb F_4$ with respect to the trace Hermitian inner product and the trace Euclidean inner product. Interesting constructions of ACD codes from binary codes are given with respect to the both
inner products.  As a good motivation of ACD codes, we have also constructed several ACD $(n, 2^{2k+1}, d)$ codes over $\F_4$ under the trace Euclidean inner product which are better than optimal Euclidean LCD $[n, k, d]$ codes over $\mathbb F_4$.

\section*{Acknowledgement}
We want to thank the referees for their careful reading and constructive comments. This paper has been greatly improved.

\end{document}